\newcommand{\E}{\mathbb{E}}
\newcommand{\R}{\mathbb{R}}
\newcommand{\calX}{\mathcal{X}}
\newcommand{\calV}{\mathcal{V}}
\newcommand{\calF}{\mathcal{F}}
\newcommand{\calW}{\mathcal{W}}
\newcommand{\calU}{\mathcal{U}}
\newcommand{\calD}{\mathcal{D}}
\newcommand{\what}{\widehat{W}}
\newcommand{\calFhat}{\widehat{\mathcal{F}}}
\newcommand{\calDhat}{\widehat{\mathcal{D}}}
\newcommand{\iid}{\overset{\textup{iid}}{\sim}}
\newcommand{\wtilde}{\widetilde{W}}
\newtheorem{remark}{Remark}
\newtheorem{lemma}{Lemma}
\newtheorem{theorem}{Theorem}
\newtheorem{corollary}{Corollary}
\newtheorem{definition}{Definition}
\newtheorem{assumption}{Assumption}
\DeclareMathOperator*{\argmin}{arg\,min}
\DeclareMathOperator*{\minimize}{\mathrm{minimize}}
\DeclareMathOperator*{\subjectto}{\mathrm{subject~to}}
\title{\LARGE \bf%
Adaptive Robust Model Predictive Control\\with Matched and Unmatched Uncertainty
}
\author{Rohan Sinha, James Harrison, Spencer M. Richards, and Marco Pavone
\thanks{%
    The authors are with Stanford University, Stanford, CA. \texttt{\{rhnsinha, jharrison, spenrich, pavone\}@stanford.edu}
}%
}
\begin{document}

\maketitle
\thispagestyle{empty}
\pagestyle{empty}

\begin{abstract}
    We propose a learning-based robust predictive control algorithm that compensates for significant uncertainty in the dynamics for a class of discrete-time systems that are nominally linear with an additive nonlinear component. 
    Such systems commonly model the nonlinear effects of an unknown environment on a nominal system. 
    We optimize over a class of nonlinear feedback policies inspired by certainty equivalent ``estimate-and-cancel'' control laws pioneered in classical adaptive control to achieve significant performance improvements in the presence of uncertainties of large magnitude, a setting in which existing learning-based predictive control algorithms often struggle to guarantee safety.
    In contrast to previous work in robust adaptive MPC, our approach allows us to take advantage of structure (i.e., the numerical predictions) in the a priori unknown dynamics learned online through function approximation. Our approach also extends typical nonlinear adaptive control methods to systems with state and input constraints even when we cannot directly cancel the additive uncertain function from the dynamics. Moreover, we apply contemporary statistical estimation techniques to certify the system's safety through persistent constraint satisfaction with high probability. Finally, we show in simulation that our method can accommodate more significant unknown dynamics terms than existing methods.
\end{abstract}

\section{Introduction}
Learning-based control offers promising methods to enable the deployment of autonomous systems in diverse, dynamic environments. Such methods learn from data to improve closed-loop performance over time. Upon deployment, these methods should provide safety guarantees and quickly adapt in the face of uncertainty; to this end, estimates of uncertainties in learned quantities must be maintained and updated as new data becomes available to ensure safety constraint satisfaction. However, many recently proposed learning-based control algorithms rely on uncertainty estimation methods that result in policies that are either too conservative (e.g., yielding limited performance to remain safe) or too fragile (e.g., infeasibility in the face of large uncertainties). In this work, we combine a simple nonlinear control law inspired by ``estimate-and-cancel'' methods in nonlinear adaptive control with robust model predictive control (MPC) techniques to control a system in an uncertain environment, represented using an unknown, nonlinear term in the dynamics. Our simulated examples show that this approach can both reduce the conservatism and fragility of existing methods. 

\textbf{Related Work.} 
We briefly review two significant paradigms for the control of uncertain systems, namely \emph{adaptive control} and \emph{robust control}. We then discuss recent works that combine ideas from both paradigms, oftentimes leveraging modern methods in machine learning.

\emph{Adaptive control} concerns the joint design of a parametric feedback controller and a parameter adaptation law to improve closed-loop performance over time when the dynamics are (partially) unknown \cite{slotine_applied_1991, ioannou_robust}. Design of these components for nonlinear systems commonly relies on expressing unknown dynamics terms as linear combinations of \emph{known basis functions}, i.e., \emph{features} \cite{slotine_applied_1991}. The adaptation law updates the feature weights online, and the controller applies part of the control signal towards cancelling the estimated term from the dynamics
\cite{ioannou_adaptive_2006, lavretsky_adaptive, slotine_applied_1991, aastrom2013adaptive}. These simple methods can achieve tracking convergence up to an error threshold that depends on the representation capacity of the features relative to the true dynamics~\cite{ioannou_adaptive_2006, lavretsky_adaptive}. Recent works proposed combining high-capacity parametric and nonparametric models from machine learning with classical adaptive control designs. This includes deep neural networks via online back-propagation \cite{joshi_asynchronous_2020}, Gaussian processes \cite{chowdhary_bayesian_2015}, and Bayesian neural networks \cite{fan_bayesian_2020, harrison_meta-learning_2020} via online Bayesian updates and meta-learned features \cite{HarrisonSharmaEtAl2018, RichardsAzizanEtAl2021}. However, these approaches are fundamentally limited by common assumptions in classical adaptive control, namely that uncertain dynamics terms can be stably canceled by the control input in their entirety, i.e., that these terms are \emph{matched uncertainties} \cite{slotine_applied_1991, ioannou_adaptive_2006, ioannou_robust, lavretsky_adaptive}. Moreover, most of these works do not consider state and input constraints, essential to safe control in practice. We generalize these classical adaptive methods to incorporate safety constraints even if the uncertainty is not fully matched.

\begin{figure}
    \centering
    \includegraphics[width=.95\linewidth]{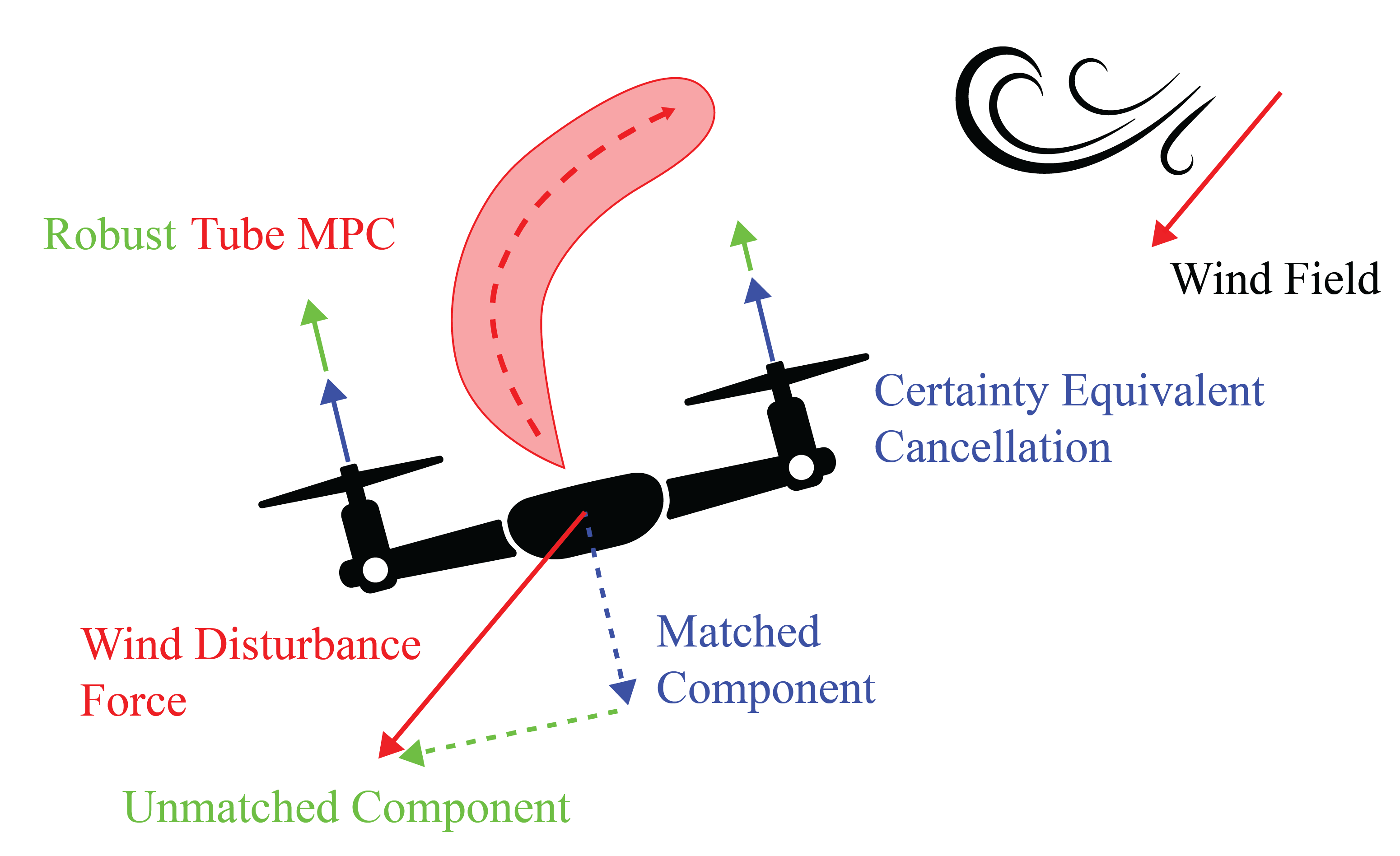}
    \caption{Illustration of our high-level approach: We learn the effect of environmental uncertainty online and curb its influence through certainty equivalent adaptive control. Then, we account for any remaining uncertainty and guarantee constraint satisfaction using tube MPC.}
    \label{fig:illustration}
\end{figure}

\emph{Robust control} seeks consistent performance despite uncertainty in the dynamics. In this work, we consider the robust control of constrained discrete-time systems using tools from predictive control. 
In particular, robust MPC algorithms for linear systems consider the control of a system subject to bounded noise or uncertain dynamics terms, i.e. \emph{disturbances}, as an optimization program with explicit state and input constraints. Some methods optimize the worst-case performance of the controller \cite{mayne:minmax}, while others tighten the constraints to accommodate the set of all possible trajectories induced by the disturbances and optimize the nominal predicted trajectory instead \cite{mayne_robust_2005, limon_input--state_2009}. To account for future information gain and reduce conservatism, these methods either fix a disturbance feedback policy \cite{mayne_robust_2005} or optimize over state feedback policies \cite{goulart_optimization_2006}. 

\emph{Adaptive robust MPC (ARMPC)}, often referred to as learning-based MPC, incorporates the online estimation (i.e., learning) of adaptive control methods into robust MPC to satisfy constraints in the presence of process noise and model uncertainty during learning.
Recent years have seen a flurry of work on nonlinear predictive control methods that apply contemporary machine learning techniques to learn uncertain dynamics online \cite{lew_safe_2020, HewingKabzanEtAl2017, koller_learning-based_2018, MishraGasparino2021}. These methods typically result in non-convex programs for trajectory optimization under the learned dynamics, while relying on conservative approximate methods for uncertainty propagation to guarantee constraint satisfaction. However, it is unclear how to construct the necessary components---the robust positive invariant and the terminal cost function---within predictive control to make claims of persistent constraint satisfaction (i.e., safety) or stability for arbitrary nonlinear systems. Some methods ignore these topics and do not make rigorous safety guarantees \cite{HewingKabzanEtAl2017}. Other work, such as \cite{koller_learning-based_2018, MishraGasparino2021}, assumes these ingredients already exist, or considers only trajectory optimization tasks where a goal region needs to be reached in a finite number of timesteps \cite{lew_safe_2020}. Moreover, iterative methods used to solve for local minima of non-convex programs can be prohibitively computationally expensive and often have limited associated performance guarantees. 

To make rigorous safety guarantees, we will focus on robust, adaptive methods for systems that are nominally linear as considered in \cite{bujarbaruah_adaptive_2018, BujarbaruahZhangEtAl2020, bujarbaruah_semi-definite_2020, soloperto_learning-based_2018, cairano_indirect_2016, KohlerAndinaEtAl2019, AswaniGonzalezEtAl2013}. A straightforward approach is to maintain an outer bound on an unknown, nonlinear term in the dynamics and use it as a disturbance bound in any chosen robust MPC scheme \cite{AswaniGonzalezEtAl2013, soloperto_learning-based_2018,  bujarbaruah_semi-definite_2020, bujarbaruah_adaptive_2018, BujarbaruahZhangEtAl2020}. These methods avoid some of the difficulties associated with trajectory optimization for nonlinear dynamics by ignoring the actual values of the nonlinear terms at any point in the state space. That is, these methods do not exploit the learned structure in the a priori unknown dynamics, often rendering them over-conservative or fragile.

\textbf{Contributions.}
We present an ARMPC method for systems with an additive unknown nonlinear dynamics term, subject to state and input constraints. Rather than construct an outer envelope for such terms, as is normative in ARMPC literature for linear systems, we develop theoretical guarantees for a broad class of \emph{function approximators}, including set membership and least-squares methods for certain noise models. Our key idea is to decompose uncertain dynamics terms into a \emph{matched} component that lies in a subspace that can be stably canceled by the control input, and an \emph{unmatched} component that lies in an orthogonal complement to this subspace. We apply certainty equivalent adaptive control techniques to stably cancel the \emph{matched} component from the dynamics and then apply robust MPC, considering the unmatched component as a bounded disturbance. Therefore, our method explicitly uses point estimates of the unknown term throughout the state space for control, i.e., it takes advantage of the learned structure in the dynamics.  

Our approach can be viewed through the lens of both adaptive control and robust adaptive MPC:
on one hand, we extend classical adaptive cancellation-based methods to a setting with uncertain, unmatched dynamics subject to state and input constraints.
On the other hand, we introduce a simple nonlinear feedback law to construct an adaptive, robust MPC that reduces the conservatism of existing approaches by taking advantage of the learned structure in a priori unknown dynamics. 
We prove our method is recursively feasible and input-to-state stable. Moreover, we demonstrate on various simulated systems that our method reduces the conservatism and increases the feasible domain of the resulting robust MPC problem compared to typical adaptive robust MPC methods. 

\section{Problem Formulation}\label{sec:prob}
We consider the robust control of nonlinear discrete time systems of the form
\begin{equation}\label{eq:problem-dynamics}
    x(t+1) = Ax(t) + Bu(t) + f(x(t)) + v(t),
\end{equation}
where~$x(t) \in \R^n$ is the system state, $u(t) \in \R^m$ is the control input, $A \in \R^{n \times n}$ and~$B \in \R^{n\times m}$ are known constant matrices, and $v(t) \in \calV$ is a disturbance in a known convex, compact set~$\mathcal{V}$ containing the origin. In addition, an \emph{unknown, nonlinear} dynamics term $f : \R^n \to \R^n$ acts on the system, representing the unmodelled influence of the environment on the nominally linear dynamics of system~\cref{eq:problem-dynamics}. For example, $f(x)$ can model the effect that wind conditions have on the linearized dynamics of a drone. We assume the disturbances are zero mean and independent and identically distributed (i.i.d.), i.e., $v(t) \iid p(v)$ and~$\E[v(t)] =0$ for all $t \geq 0$. Our goal is to regulate the system close to zero according to the robust optimal control problem
\begin{equation}\label{eq:problem}
\begin{aligned}
    \minimize_{x,u}\enspace&
        \E\Big[ \sum_{t=0}^\infty h(x(t), u(t)) \Big] 
    \\
    \subjectto\enspace
        &x(t+1) = Ax(t) + Bu(t) + f(x(t)) + v(t)
        \\&u(t) \in \calU,\, x(t) \in \calX,\, v(t) \in \mathcal{V}, \quad \forall t\in\mathbb{N}_{\geq 0}
\end{aligned}
\end{equation}
where $\calX \subseteq \R^n$ and $\calU \subseteq \R^m$ are compact convex sets containing the origin,
and $h(x,u) =  x^\top Q x + u^\top R u$ is a quadratic stage cost parameterized by positive definite matrices $Q \in \mathbb{S}^n_{\succ 0}$ and $R \in \mathbb{S}^m_{\succ 0}$. The problem \cref{eq:problem} is computationally intractable to solve because the horizon is infinite and the nonlinear function $f$ makes the problem non-convex.
To approximately solve \cref{eq:problem}, we need additional assumptions on the unknown, nonlinear dynamics term $f$. In particular, to derive a controller that is robust to any possible value of $f(x)$, we need $f$ to be bounded on $\mathcal{X}$. 
Moreover, to construct guarantees on the online estimation of $f$ and establish properties of a controller using this estimate, we also need to assume some structure of $f$. For these reasons, we make the following assumption.

\begin{assumption}[structure]\label{as:bounded-feat}
    The nonlinear dynamics term $f : \R^n \to \R^n$ is linearly parameterizable, i.e.,
    \begin{equation}\label{eq:func-approx}
        f(x) = W\phi(x),\ \forall x \in \R^n,
    \end{equation}
    where $\phi : \R^n \to \R^d$ is a \emph{known} nonlinear feature map, and $W \in \R^{n \times d}$ is an \emph{unknown} weight matrix. Moreover, ${\|\phi(x)\| \leq 1}$ for any  $x \in \calX$, where $\|\cdot\|$ is the Euclidean~norm.
\end{assumption}

Representing a nonlinear function using a feature map is common both in adaptive control \cite{slotine_applied_1991, ioannou_robust} and contemporary machine learning \cite{harrison_meta-learning_2020, mania2020active}, as they can represent arbitrary functions if properly designed. For simplicity, we assume the features satisfy a unit norm bound without loss of generality; this still allows for function classes like neural networks with scaled sigmoid outputs.

\textbf{Matched and Unmatched Uncertainty:} While it is common in adaptive control to assume the uncertain function~$f$ in \cref{eq:problem-dynamics} can be stably cancelled in its entirety \cite{slotine_applied_1991, lavretsky_adaptive}, we will generalize this approach to a setting where perfect cancellation is not possible. To distinguish between the components of the uncertain dynamics~$f$ that can and cannot be cancelled, we classify the dynamic uncertainty as follows:
\begin{definition}[matched and unmatched uncertainty] 
    The uncertain function $f(x)$ in \cref{eq:problem-dynamics} is a \emph{matched uncertainty} if $f(x) \in \mathrm{Range}(B)$ for all $x \in \calX$. Conversely, if there exists an $x \in \calX$ such that $f(x) \notin \mathrm{Range}(B)$, then $f(x)$ is an \emph{unmatched uncertainty}.
\end{definition}

In this work, we assume that the $B$ matrix in \cref{eq:problem-dynamics} has full column rank, i.e., there are no redundant actuators, thereby guaranteeing the existence of the Moore-Penrose pseudoinverse $B^\dagger = (B^\top B)^{-1}B^\top$. Therefore, if $f(x)$ is a matched uncertainty, then the function $g(x) = B^\dagger f(x)$ satisfies $f(x) = Bg(x)$ for any $x \in \calX$. 

Controlling systems with matched uncertainty is a classical problem in the adaptive control literature, much of which relies on the observation that setting $u(t) = \bar{u}(t) - g(x(t))$ in~\cref{eq:problem-dynamics} would cancel the nonlinear term to yield linear dynamics with respect to the nominal input $\bar{u}(t)$. \emph{Certainty equivalent} controllers that approximately cancel $g(x)$ with an estimate $\hat{g}(x)$ result in simple nonlinear adaptive laws that can achieve asymptotic tracking performance for matched systems. Even though systems are often designed to be easy to control, unmatched uncertainty affects many practical systems of interest, such as the dynamics of any underactuated robotic system (e.g., quadrotors and cars). We propose to decompose the uncertain function $f$ into a matched and unmatched component, apply certainty equivalent cancellation to the matched component, and curb the impact of the unmatched component using robust MPC strategies. Applying part of the input to cancel matched uncertainty allows us to instantaneously prevent some components of $f$ from leaking into the dynamics, avoiding the need to react to large observed disturbances.

\textbf{ISS Stability.} Due to the disturbance $v(t)$, the system~\cref{eq:problem-dynamics} cannot be regulated exactly to the origin even asymptotically. Therefore, robust control algorithms are typically analyzed using Input-to-State Stability (ISS) theory to establish more appropriate stability properties \cite{limon_input--state_2009, goulart_optimization_2006, bujarbaruah_semi-definite_2020}. 
We briefly review relevant results in ISS theory for time-varying systems, since online adaptation of an estimate of the unknown function $f$ results in a time-varying closed loop system.

\begin{definition}[ISS stability \cite{li_input--state_2018}]
    The system $x(t+1) = q(t, x(t), v(t))$ with disturbance~$v(t)$ is globally \emph{Input-to-State Stable (ISS)} if there exist a class-$\mathcal{KL}$ function $\beta: \R_+ \times \R_+ \to \R_+$ and a class-$\mathcal{K}$ function $\gamma: \R_+ \to \R_+$ such that
    \begin{equation}
        \|x(t)\| \leq \beta(\|x(0)\|, t) + \gamma(\textstyle{\sup_{k \in \{0, \dots, t-1\}}} \|v(k)\|),
    \end{equation}
    for all $x(0) \in \R^n$ and $t \in \mathbb{N}_{\geq 0}$.
\end{definition}

In essence, ISS requires that the nominal system is asymptotically stable and the influence of the disturbances is bounded. This makes it a convenient framework to analyze the stability of systems subject to random disturbances. Similarly to regular nonlinear stability analysis, we can show a system is ISS if there exists an ISS-Lyapunov function. 

\begin{definition}[ISS-Lyapunov function]
    The function $V: \R \times\R^n \to \R$ is an \emph{ISS-Lyapunov function} for the system ${x(t+1) = q(t, x(t), v(t))}$ if it is continuous in $x$, continuous at the origin for all $t \in \mathbb{N}_{\geq 0}$, and there exist class-$\mathcal{K}_{\infty}$ functions $\alpha_1,\alpha_2,\alpha_3$ and a class-$\mathcal{K}$ function $\sigma$ such that
    \begin{equation}\begin{aligned}
        \alpha_1(\|x(t)\|) \leq V(t,x(t)) &\leq \alpha_2(\|x(t)\|) 
        \\
        V(t+1, x(t+1)) - V(t, x(t)) &\leq -\alpha_3(\|x(t)\|) + \sigma(\|v(t)\|)
    \end{aligned}
    \end{equation}
    for all $x(t) \in \R^n$.
\end{definition}

\begin{theorem}[\!\!\label{thm:iss}\cite{li_input--state_2018}]
    A time-varying system is globally ISS if it admits an ISS-Lyapunov function.
\end{theorem}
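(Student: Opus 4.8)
The plan is to pass from the vector system to a scalar \emph{comparison inequality} for the Lyapunov value $W(t):=V(t,x(t))$ and iterate it. From the sandwich bound $V(t,x)\le\alpha_2(\|x\|)$ we get $\|x(t)\|\ge\alpha_2^{-1}(W(t))$, so the decrease condition becomes
\[
W(t+1)\ \le\ W(t)-\rho\big(W(t)\big)+\sigma\big(\|v(t)\|\big),\qquad \rho:=\alpha_3\circ\alpha_2^{-1}\in\mathcal{K}_\infty .
\]
Two harmless normalizations come first: (i) replace $\alpha_3$ by $\min\{\alpha_3,\alpha_1\}$ (still $\mathcal{K}_\infty$), which, together with $\alpha_1\le\alpha_2$ (immediate from the sandwich bounds), forces $\rho\le\mathrm{id}$; and (ii) invoke the standard fact that $\rho$ can be further minorized by a $\mathcal{K}_\infty$ function --- still called $\rho$ --- for which $\mathrm{id}-\rho$ is again of class $\mathcal{K}_\infty$. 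After (i)--(ii) every comparison function constructed below is monotone, and all the resulting bounds will be uniform in $t$, which is exactly what neutralizes the time-dependence of $q$ and $V$.

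Next I would run a dichotomy on the size of $W(t)$ relative to the running disturbance level $\bar v_t:=\sup_{k\in\{0,\dots,t-1\}}\|v(k)\|$ (the right quantity, since $x(t)$ depends only on $v(0),\dots,v(t-1)$). On the ``large'' region $\{s\ge\rho^{-1}(2\sigma(\bar v_t))\}$ the comparison inequality yields $W(t+1)\le W(t)-\tfrac12\rho(W(t))=:\mu(W(t))$ with $\mu\in\mathcal{K}_\infty$ and $\mu(s)<s$ for $s>0$; the standard discrete-time comparison lemma then furnishes a $\mathcal{KL}$ function $\tilde\beta$ with $W(t)\le\tilde\beta(W(0),t)$ as long as the trajectory has stayed in this region. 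On the complement one checks that the sublevel set $\{s\le\hat\gamma(\bar v_t)\}$, with $\hat\gamma(r):=\rho^{-1}(2\sigma(r))+\sigma(r)\in\mathcal{K}$, is forward invariant for the comparison inequality --- the extra $+\sigma(r)$ absorbs the one-step overshoot from below. Splicing the decreasing phase and the trapped phase gives, for all $t$,
\[
W(t)\ \le\ \max\big\{\tilde\beta(W(0),t),\ \hat\gamma(\bar v_t)\big\}\ \le\ \tilde\beta(W(0),t)+\hat\gamma(\bar v_t).
\]

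Finally I would translate back to the state: using $\|x(t)\|\le\alpha_1^{-1}(W(t))$, $W(0)=V(0,x(0))\le\alpha_2(\|x(0)\|)$, and the weak triangle inequality $\alpha_1^{-1}(a+b)\le\alpha_1^{-1}(2a)+\alpha_1^{-1}(2b)$ valid for $\alpha_1\in\mathcal{K}_\infty$, we obtain
\[
\|x(t)\|\ \le\ \underbrace{\alpha_1^{-1}\!\big(2\tilde\beta(\alpha_2(\|x(0)\|),t)\big)}_{=:\ \beta(\|x(0)\|,\,t)}\ +\ \underbrace{\alpha_1^{-1}\!\big(2\hat\gamma(\bar v_t)\big)}_{=:\ \gamma(\bar v_t)},
\]
with $\beta\in\mathcal{KL}$ and $\gamma\in\mathcal{K}$, which is precisely the ISS estimate. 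The obstacle is not conceptual but technical bookkeeping with comparison functions: making normalization (ii) rigorous so that $\mathrm{id}-\rho$ (hence $\mu$) is genuinely of class $\mathcal{K}_\infty$ and the discrete comparison lemma applies, correctly handling the $\bar v_t$-versus-$\bar v_{t+1}$ off-by-one when splicing the two phases, and invoking the existence of the $\mathcal{KL}$ estimate $\tilde\beta$ for the iteration $s_{k+1}\le\mu(s_k)$. Since the statement is quoted from \cite{li_input--state_2018}, in the paper one simply cites it; the argument above is the route I would take to reprove it from scratch.
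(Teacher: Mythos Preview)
The paper does not prove \cref{thm:iss}; it is stated as a citation from \cite{li_input--state_2018} and used as a black box in the proof of \cref{thm:unmatched-iss}. You correctly note this yourself at the end of your proposal. There is therefore nothing in the paper to compare against.

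Your sketch is the standard route to this result (essentially the discrete-time analogue of the Sontag/Jiang--Wang argument): reduce to a scalar comparison inequality, normalize $\rho$ so that $\mathrm{id}-\rho\in\mathcal{K}_\infty$, run the ``large $W$'' / ``small $W$'' dichotomy, and translate back with $\alpha_1^{-1}$. The only point I would flag is the index mismatch you already identify: in the ``large'' region you want $\rho(W(t))\ge 2\sigma(\|v(t)\|)$, but your threshold $\rho^{-1}(2\sigma(\bar v_t))$ uses $\bar v_t=\sup_{k\le t-1}\|v(k)\|$, which does not control $\|v(t)\|$. The clean fix is to run the whole argument with $\bar v:=\sup_{k\ge 0}\|v(k)\|$ (or $\bar v_{t+1}$) and only at the very end observe that the resulting $\gamma$-term depends on $\sup_{k\le t-1}\|v(k)\|$ because $x(t)$ does; alternatively, absorb the one-step overshoot into $\hat\gamma$ as you hint. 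With that bookkeeping addressed, the argument is correct.
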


The above definitions naturally extend to local ISS stability; for a detailed discussion, we refer readers to \cite{li_input--state_2018,jiang_input--state_2001, limon_input--state_2009}.

\section{Adaptive Robust MPC}\label{sec:approach}
In this section we first describe assumptions on and necessary features of the learning procedure in a way that is agnostic to the choice of learning algorithm. We then introduce our adaptive robust MPC approach, and prove stability of the combined learning and control framework.

\textbf{Learning Desiderata.} Since the nonlinear dynamics term~$f$ is unknown, our method takes a \emph{certainty equivalent} approach by substituting an estimate~$\hat{f}$ that is refined online as more data becomes available. To guarantee the adaptive robust MPC framework satisfies state and input constraints for all time (i.e., safety), we make several assumptions on~$\hat{f}$.

We maintain the estimate
\begin{equation}
    \hat{f}(x,t) = \what(t)\phi(x)
\end{equation}
of $f(x)$, where $\what(t) \in \R^{n \times d}$ is our estimate of $W$ at time~$t$. To this end, we need bounds on our initial uncertainty, i.e., the difference between $f(x)$ and $\hat{f}(x,0)$ for all $x$. For a general statistical estimator, this entails specifying a risk tolerance $\delta \in (0,1)$ and computing confidence intervals on the estimate.

\begin{assumption}[prior knowledge]\label{as:prior-conf}
    Let $w_i$, $\hat{w}_i(t)$, and $\tilde{w}_i(t)$ be the $i$-th rows of $W$, $\what(t)$, and $\widetilde{W}(t) \coloneqq \what(t) - W$, respectively, for $i \in \{1,2,\dots,n\}$. At $t=0$, we know an initial estimate $\what(0) \in \R^{n \times d}$ and bounded sets $\{\mathcal{W}_i(0)\}_{i=1}^n$ with $\mathcal{W}_i(0) \subset \R^d$, such that $\widetilde{W}(0) \in \mathcal{W}(0)$ with probability at least $1 - \delta$, where
    \begin{equation}
        \mathcal{W}(t) \coloneqq \big\{
            \widetilde{W} \in \R^{n \times d} \mid \tilde{w}_i \in \mathcal{W}_i(t),\,\forall i \in \{1,\dots,n\}
        \big\},
    \end{equation}
    for all $t \in \mathbb{N}_{\geq 0}$.
\end{assumption}

\cref{as:prior-conf} provides only an initial bound on the error of the estimate, which we explicitly label as the estimate at $t=0$. Later, we will define $\mathcal{W}(t)$ for all $t \in \mathbb{N}_{\geq 0}$ when we adaptively update our estimate $\what(t)$ and the bounds $\{\mathcal{W}_i(t)\}_{i=1}^n$ online. Our approach leverages the certainty equivalent ``estimate and cancel'' control laws pioneered in classical unconstrained adaptive control \cite{slotine_applied_1991, lavretsky_adaptive}. As such, \cref{as:bounded-feat} and \cref{as:prior-conf} are necessary to bound our approximation error. However, since we specify the risk tolerance $\delta$, we cannot guarantee exact constraint satisfaction for all time. We instead relax our definition of safety to
\begin{equation}\label{eq:safety}
    \mathrm{Prob}\big(x(t) \in \calX,\, u(t) \in \calU,\, \forall t\geq 0\big) \geq 1 - \delta,
\end{equation}
which states that the probability of a constraint violation should be no more than $\delta$ over the entire realized trajectory.
Moreover, to guarantee closed loop safety, we assume we have an online adaptation strategy that ensures the quality of the estimate $\what(t)$ cannot get worse over time. 

\begin{assumption}[online learning]\label{as:shrinking}
    We have an \emph{online parameter estimator} that maps an initial estimate $\what(0)$, the associated $1 - \delta$ confidence interval $\calW(0)$, and the trajectory history $\{x(k), u(k)\}_{k=0}^t$ to an \emph{online estimate}~$\what(t)$ and confidence interval $\calW(t)$ at time $t$, such that $\hat{w}_i(t) - w_i \in \calW_i(t)$ for all time $t \in \mathbb{N}_{\geq 0}$ and $i \in \{1,2\dots,n\}$ with probability at least $1-\delta$. We assume the confidence intervals on $\what(t)$ are not growing with time, i.e.,%
    \begin{equation}
        \calW(t+1) \subseteq \calW(t),
    \end{equation}
    for all $t \in \mathbb{N}_{\geq 0}$.
\end{assumption}

Formulating separate confidence intervals for each row of $\what(t)$ is a natural approach, as fitting $\what(t)$ to historical data decomposes into $n$ separate least-squares problems (one for each row).
Crucially, \cref{as:shrinking} allows us to treat the confidence intervals $\calW(t)$ as exact bounds in the control design, since a controller that guarantees constraint satisfaction conditioned on the event that $\hat{w}_i(t) - w_i \in \calW_i(t)$ for all time then satisfies \cref{eq:safety}. Therefore, we treat the chance constraint \cref{eq:safety} as a proxy for robust constraint satisfaction and construct our approach for the remainder of \cref{sec:approach} conditioned on the event that $\hat{w}_i(t) - w_i \in \calW_i(t)$ for all time. This approach was also taken in \cite{lew_safe_2020, dean_regret_2018}. We discuss two commonplace estimators that satisfy our desiderata in \cref{sec:est}. 

\cref{as:shrinking} does not require the range of the estimated function to shrink over time. Therefore, our estimation procedure differs from methods such as \cite{bujarbaruah_adaptive_2018, bujarbaruah_semi-definite_2020, soloperto_learning-based_2018} that refine a shrinking bound exclusively on the range of the estimate without taking direct advantage of the structure in the nonlinear dynamics.

\subsection{Certainty Equivalent Cancellation}\label{sec:unmatched}
We propose optimizing over feedback policies that cancel as much of the nonlinear term $f(x)$ as possible.
\begin{definition}
The set of \emph{matching Certainty Equivalent (CE) policies} is the time-varying function class whose elements $\pi: \calX \times \mathbb{N}_{\geq 0} \to \calU$ are of the form
\begin{equation}\label{eq:unmatched-ce-law}
    \pi(x(t), t) = u^\star(x(t), t) - B^\dagger \hat{f}(x(t), t)
    \smallskip
\end{equation}
\end{definition}
The matching CE policies simply project $\hat{f}(x(t), t)$ onto $\operatorname{Range}(B)$, and cancel out as much of the disturbance as we can in the Euclidean norm sense, since
\begin{equation*}
    B^\dagger \hat f(x(t),t) = \argmin_z \|Bz - \hat{f}(x(t), t)\|.
\end{equation*}
The matching CE law \cref{eq:unmatched-ce-law} results in the closed-loop dynamics%
\begin{equation}\label{eq:unmatched-error}
\begin{aligned}
    x(t+1) 
    &= Ax(t) + B\pi(x(t),t) + f(x(t)) + v(t) \\
    &= Ax(t) + Bu^\star(x(t),t) + d(t)
\end{aligned}~,
\end{equation}
where we define the compound disturbance term as
\begin{equation}\begin{aligned}
    d(t) 
    &\coloneqq v(t) + f(x(t)) - BB^\dagger\hat{f}(x(t),t) \\
    &= v(t)\begin{aligned}[t]
        &+ BB^\dagger(f(x(t)) - \hat{f}(x(t),t)) \\
        &+ (I - BB^\dagger)f(x(t)).
    \end{aligned}
\end{aligned}
\end{equation}
We have written $d(t)$ above with three terms to highlight that it is driven by the process disturbance~$v(t)$, the estimation error~$f(x) - \hat{f}(x)$, and the imperfect matching using~$B^\dagger$.

We now introduce two simple polytopic approximations to bound the support of the cancellation term in \cref{eq:unmatched-ce-law} and the terms that make up the compound disturbance $d(t)$. This will allow us to apply the certainty equivalent cancellation in \cref{eq:unmatched-ce-law} and handle $d(t)$ by appropriately selecting $u^\star(\cdot)$ using robust MPC. As mentioned in our learning desiderata, we assume $\hat{w}_i(t) - w_i \in \calW_i(t)$ for the rest of this section.
\begin{lemma}\label{lem:unmatched-support}
Consider online approximation of $f(x)$ with features satisfying \cref{as:bounded-feat}, an estimator satisfying \cref{as:shrinking}, and define the estimated support set as
\begin{equation}
    \calF(t) \coloneqq \{z \in \R^n : |z_i| \leq \|\hat{w}_i(t)\| + 2 \max_{\tilde{w}_i \in \calW_i(t)}\|\tilde{w}_i\|\}.
\end{equation}
Then, for all $x \in \mathcal{X}$ and $t,k \in \mathbb{N}_{\geq 0}$ it holds that
\begin{equation}
    \hat{f}(x, t + k), f(x) \in \calF(t).
\end{equation}
\end{lemma}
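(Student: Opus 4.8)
The plan is to prove the two inclusions $f(x)\in\calF(t)$ and $\hat f(x,t+k)\in\calF(t)$ componentwise, working throughout on the probability-$(1-\delta)$ event of \cref{as:shrinking} that $\tilde w_i(\tau)\in\calW_i(\tau)$ for all $\tau$ and $i$, as the excerpt instructs. The only ingredients I would need are: from \cref{as:bounded-feat}, the identities $f_i(x)=w_i^\top\phi(x)$ and $\hat f_i(x,\tau)=\hat w_i(\tau)^\top\phi(x)$ together with $\|\phi(x)\|\le 1$ on $\calX$; and from \cref{as:shrinking}, the nesting $\calW_i(t+k)\subseteq\calW_i(t)$, which follows by a trivial induction on $k$ from $\calW(t+1)\subseteq\calW(t)$, and which in particular gives $\max_{\tilde w_i\in\calW_i(t+k)}\|\tilde w_i\|\le\max_{\tilde w_i\in\calW_i(t)}\|\tilde w_i\|=:r_i(t)$.

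For the true function I would argue as follows. Cauchy--Schwarz and the feature bound give $|f_i(x)|=|w_i^\top\phi(x)|\le\|w_i\|$ for every $x\in\calX$. Writing $w_i=\hat w_i(t)-\tilde w_i(t)$ and using the triangle inequality with $\tilde w_i(t)\in\calW_i(t)$ yields $\|w_i\|\le\|\hat w_i(t)\|+r_i(t)\le\|\hat w_i(t)\|+2r_i(t)$, which is precisely the $i$-th defining inequality of $\calF(t)$, so $f(x)\in\calF(t)$.

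For the estimate at a future time $t+k$, the same Cauchy--Schwarz step gives $|\hat f_i(x,t+k)|=|\hat w_i(t+k)^\top\phi(x)|\le\|\hat w_i(t+k)\|$. I would then decompose $\hat w_i(t+k)=w_i+\tilde w_i(t+k)$ and bound $\|\hat w_i(t+k)\|\le\|w_i\|+\|\tilde w_i(t+k)\|\le\big(\|\hat w_i(t)\|+r_i(t)\big)+r_i(t)$, where the first bracket reuses the bound from the previous paragraph and the last term uses $\|\tilde w_i(t+k)\|\le\max_{\tilde w_i\in\calW_i(t+k)}\|\tilde w_i\|\le r_i(t)$ via the nesting. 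This is again the $i$-th defining inequality of $\calF(t)$, so $\hat f(x,t+k)\in\calF(t)$, completing the proof.

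I do not expect a genuine obstacle here: the entire argument is a short chain of triangle inequalities plus one application of Cauchy--Schwarz. The only points that require a moment's care are (a) that every statement is made conditionally on the good event, so $\tilde w_i(\tau)$ may be treated deterministically as an element of $\calW_i(\tau)$, and (b) seeing why the factor $2$ in the definition of $\calF(t)$ is the correct constant: one copy of $r_i(t)$ absorbs the gap between $w_i$ and the current estimate $\hat w_i(t)$, and a second copy absorbs the gap between $w_i$ and the \emph{future} estimate $\hat w_i(t+k)$, which is controlled only because \cref{as:shrinking} keeps the confidence sets from growing.
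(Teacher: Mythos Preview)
Your proposal is correct and matches the paper's proof almost exactly: both arguments reduce to Cauchy--Schwarz with the feature bound, followed by triangle inequalities routed through the true parameter $w_i$ and the nesting $\calW_i(t+k)\subseteq\calW_i(t)$. The only cosmetic difference is that the paper bounds $\|\hat w_i(t+k)\|$ by first splitting off $\|\hat w_i(t)\|$ and then inserting $w_i$ into the remainder, whereas you first split off $\|w_i\|$ and reuse your bound on it---the resulting inequalities are identical.
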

\begin{proof}
We show $\calF(t)$ over-approximates the range of values $z =\hat{f}(x, t+k)$ can take for any $x \in \calX$ and $k \geq 0$. Let $z = \hat{f}(x, t+k) = \widehat{W}(t+k)\phi(x)$. Then,
\begin{equation}\begin{aligned}
    |z_i|
    &= |\hat{w}_i(t+k)^\top\phi(x)| \\
    &\leq \|\hat{w}_i(t+k)\| \\
    &\leq \|\hat{w}_i(t)\| + \|\hat{w}_i(t+k) - \hat{w}_i(t)\| \\
    &\leq \|\hat{w}_i(t)\| + \|\hat{w}_i(t+k) - w_i\| + \|\hat{w}_i(t) - w_i\|.
\end{aligned}
\end{equation}
The shrinking confidence interval property from \cref{as:shrinking} gives $ \calW_i(t+k)\subseteq \calW_i(t)$ for $k\geq 0$, so
\begin{equation}
    |z_i| \leq \|\hat{w}_i(t)\| + 2\max_{\tilde{w}_i \in \calW_i(t)}\|\tilde{w}_i\|,
\end{equation}
which proves that~$\hat{f}(x, t+k) \in \calF(t)$ for all~$k\geq 0$.
In addition, let~$y = f(x) = W\phi(x)$ for some $x\in\calX$. Then, 
\begin{equation}\begin{aligned}
    |y_i| 
    &= |w_i^\top \phi(x)| \\
    &\leq \|w_i\| \\
    &\leq \|\hat w_i(t)\| + \|\hat w_i(t) - w_i\| \\
    &\leq \|\hat w_i(t)\| + \max_{\tilde{w}_i \in \calW_i(t)}\|\tilde{w}_i\|.
\end{aligned}
\end{equation}
Hence,~$f(x) \in \calF(t)$. 
\end{proof}

The set $\mathcal{F}(t)$ in \cref{lem:unmatched-support} contains all possible values that our online estimate can take for all future times. It is not straightforward to create a tighter approximation (i.e., eliminate the factor of 2) without additional assumptions. To see this, consider a constant unit norm ball confidence interval. In the worst case, the true parameter lies on the boundary of the ball around the current estimate. This means all future estimates may lie a Euclidean distance of 2 units away from the current estimate, yielding the bound in \cref{lem:unmatched-support}.

Moreover, \cref{lem:unmatched-support} does not require that $\calF(t+1) \subseteq \calF(t)$, so we provide the following corollary to help us create an approximation that is non-increasing in size.
\begin{corollary}\label{cor:unmatched-support}
    At time $t$, the sets $\{\calF(i)\}_{i=0}^t$ are known, so
    \begin{equation}
        \hat{f}(x, t + k), f(x) \in \bigcap_{i=0}^t \calF(i) \eqqcolon \calFhat(t)
    \end{equation}
    for all $x \in \calX$ and $k \in \mathbb{N}_{\geq 0}$, where we define $\calFhat(t)$ as the set
    \begin{equation}
        \{z : |z_i| \leq \min_{j \in \{0,\dots,t\}}[\|\hat{w}_i(j)\| 
            + 2\!\max_{\tilde{w}_i \in \calW_i(j)}\|\tilde{w}_i\|]\}.
    \end{equation}
    Note $\calFhat(t)$  can be computed recursively in time.
\end{corollary}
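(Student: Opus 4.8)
The plan is to derive the corollary as an immediate consequence of \cref{lem:unmatched-support} applied at each past time index, together with an elementary identity for intersections of origin-centered boxes. First I would fix $x \in \calX$, $t \in \mathbb{N}_{\geq 0}$, and $k \in \mathbb{N}_{\geq 0}$, and observe that for every $i \in \{0,\dots,t\}$ we may invoke \cref{lem:unmatched-support} with its time index set to $i$ and its lookahead set to $t + k - i$, which is a legitimate nonnegative integer precisely because $i \leq t$. This yields $\hat f(x, t+k) = \hat f\big(x,\, i + (t+k-i)\big) \in \calF(i)$ and, from the same lemma, $f(x) \in \calF(i)$. Since $i$ was arbitrary in $\{0,\dots,t\}$, both $\hat f(x,t+k)$ and $f(x)$ lie in $\bigcap_{i=0}^{t}\calF(i) \eqqcolon \calFhat(t)$, which is the containment claimed in the corollary.

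Next I would verify that $\calFhat(t)$ admits the stated closed form. Writing $r_\ell(j) \coloneqq \|\hat w_\ell(j)\| + 2\max_{\tilde w_\ell \in \calW_\ell(j)}\|\tilde w_\ell\|$, each set $\calF(j)$ is the axis-aligned hyperrectangle $\{z \in \R^n : |z_\ell| \leq r_\ell(j),\ \ell = 1,\dots,n\}$ centered at the origin. A point $z$ therefore lies in $\bigcap_{j=0}^{t}\calF(j)$ if and only if $|z_\ell| \leq r_\ell(j)$ for all $j \in \{0,\dots,t\}$ and all $\ell$, i.e. if and only if $|z_\ell| \leq \min_{j\in\{0,\dots,t\}} r_\ell(j)$ for every $\ell$; this is exactly the description of $\calFhat(t)$ in the statement. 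Recursive computability then follows because $\calFhat(t) = \calFhat(t-1)\cap \calF(t)$, or equivalently the per-coordinate half-widths satisfy $\min_{j\in\{0,\dots,t\}} r_\ell(j) = \min\{\min_{j\in\{0,\dots,t-1\}} r_\ell(j),\, r_\ell(t)\}$, so at each step only the previous half-widths and the newly available box $\calF(t)$ are needed.

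I do not expect a genuine obstacle here, since this is a corollary: the entire content is (i) the index shift $t+k-i \geq 0$ that legitimizes applying \cref{lem:unmatched-support} at each earlier time, and (ii) the fact that a finite intersection of boxes centered at the origin is again such a box with coordinatewise-minimal half-widths. If anything, the only point worth stating carefully is (i), to make clear why every value reachable by the estimate from time $t$ onward is simultaneously contained in every one of the already-observed sets $\calF(0),\dots,\calF(t)$.
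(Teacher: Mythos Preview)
Your argument is correct and is exactly the intended one: the paper states this corollary without proof because it follows immediately from applying \cref{lem:unmatched-support} at each earlier time index $i \in \{0,\dots,t\}$ (using lookahead $t+k-i \geq 0$) and then intersecting the resulting origin-centered boxes coordinatewise. Your remarks on the closed form and recursive update are likewise the obvious reading of the corollary's statement.
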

To construct a robust MPC problem to optimize the CE policy \cref{eq:unmatched-ce-law}, we need to account for the compound disturbance $d(t)$. We do this with the following lemma.
\begin{lemma}\label{lem:unmatched-disturbance}
Assume the online parameter estimator satisfies \cref{as:shrinking} with features that satisfy \cref{as:bounded-feat} and define the approximation error support $\mathcal{D}(t)$ as the set
\begin{align}
    \{z \in \R^n \mid |z_i| \leq \max_{\tilde{w}_i \in \calW_i(t)} \|\tilde{w}_i\|,  \forall i \in \{1, \dots,n\}\}.
\end{align}
If we control the system \cref{eq:problem-dynamics} using the certainty equivalent control law \cref{eq:unmatched-ce-law}, then at time $t$ for all $k\in \mathbb{N}_{\geq 0}$, the compound disturbance $d(t+k)$ in the dynamics~\cref{eq:unmatched-error} is contained in the set $\calDhat(t) \subseteq \calDhat(t-1)$, defined as
\begin{equation}\label{eq:unmatched-dist}
    \calDhat(t) \coloneqq (I - BB^\dagger)\calFhat(t) \oplus BB^\dagger \calD(t) \oplus \calV.
\end{equation}
Here $\oplus$ indicates the Minkowski sum and a matrix-set multiplication indicates a linear transformation of the set's elements.
\end{lemma}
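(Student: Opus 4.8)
The plan is to decompose the compound disturbance into its three constituent pieces --- the process noise, the matched estimation error, and the unmatched residual --- bound each one by the corresponding set in \cref{eq:unmatched-dist}, and then reassemble via the Minkowski sum. Using the expression for $d$ given just below \cref{eq:unmatched-error}, at time $t+k$ we have
\[
    d(t+k) = v(t+k) + BB^\dagger\big(f(x(t+k)) - \hat f(x(t+k),t+k)\big) + (I - BB^\dagger)f(x(t+k)).
\]
The process-noise term lies in $\calV$ by assumption. For the unmatched residual, \cref{cor:unmatched-support} already guarantees $f(x(t+k)) \in \calFhat(t)$ for every $k \ge 0$, so $(I - BB^\dagger)f(x(t+k)) \in (I - BB^\dagger)\calFhat(t)$ directly.

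The crux is the matched estimation-error term. Here I would write $f(x) - \hat f(x,t+k) = -\wtilde(t+k)\phi(x)$ and bound it row by row: the $i$-th entry has magnitude $|\tilde w_i(t+k)^\top \phi(x)| \le \|\tilde w_i(t+k)\|$ by Cauchy--Schwarz combined with the unit-norm feature bound of \cref{as:bounded-feat}. The shrinking-confidence property of \cref{as:shrinking} gives $\calW_i(t+k) \subseteq \calW_i(t)$ for all $k \ge 0$ (conditioned, as throughout this section, on the event $\hat w_i(\cdot) - w_i \in \calW_i(\cdot)$), hence $\|\tilde w_i(t+k)\| \le \max_{\tilde w_i \in \calW_i(t)}\|\tilde w_i\|$, i.e., $f(x) - \hat f(x,t+k) \in \calD(t)$ and therefore $BB^\dagger(f(x(t+k)) - \hat f(x(t+k),t+k)) \in BB^\dagger\calD(t)$. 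Adding the three set memberships gives $d(t+k) \in (I - BB^\dagger)\calFhat(t) \oplus BB^\dagger\calD(t) \oplus \calV = \calDhat(t)$, which is the first claim.

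For the nesting $\calDhat(t) \subseteq \calDhat(t-1)$, I would observe that each ingredient is monotone non-increasing in $t$: $\calFhat(t) = \bigcap_{i=0}^t \calF(i) \subseteq \calFhat(t-1)$ by construction, and $\calD(t) \subseteq \calD(t-1)$ because $\calW_i(t) \subseteq \calW_i(t-1)$ makes each defining half-width $\max_{\tilde w_i \in \calW_i(t)}\|\tilde w_i\|$ non-increasing. Since linear images of sets and Minkowski sums both preserve inclusions, the monotonicity propagates to $\calDhat$. I do not expect a substantive obstacle in this lemma; the one point that must be handled with care is that the bound has to hold \emph{uniformly over all future times} $t+k$, which is precisely why \cref{as:shrinking} --- and, via \cref{cor:unmatched-support}, the intersected set $\calFhat$ rather than the single-time set of \cref{lem:unmatched-support} --- is the right object to invoke.
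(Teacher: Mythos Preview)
Your argument is correct and follows essentially the same route as the paper's proof: decompose $d(t+k)$ into its three summands, bound the estimation error row-wise via Cauchy--Schwarz and the unit feature norm, push the bound forward in time with \cref{as:shrinking}, invoke \cref{cor:unmatched-support} for the unmatched piece, and conclude the nesting from the monotonicity of $\calFhat$ and $\calD$. The only cosmetic difference is that the paper bounds $\hat f - f$ and then appeals to the symmetry of $\calD(t)$ to flip the sign, whereas you bound $f - \hat f$ directly; both are equivalent.
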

\begin{proof}
At any state~$x \in \calX$ and time $t + k$, let $z = \hat{f}(x, t+k) - f(x) = (\what(t+k) - W)\phi(x) = \wtilde \phi(x)$ for some $\wtilde \in \calW(t+k)\subseteq \calW(t)$. Then
\begin{align}
    |z_i| &= |\tilde{w}_i^\top \phi(x_t)| \leq \|\tilde{w}_i\| \leq \max_{\tilde{w}_i \in \calW_i(t)} \|\tilde{w_i}\|. \nonumber
\end{align}
So $\hat{f}(x, t+k) - f(x) \in \mathcal{D}(t)$ for all $k \in \mathbb{N}_{\geq 0}$. Since $\calW_i(t) \subseteq \calW_i(t-1)$ by \cref{as:shrinking}, this implies $\mathcal{D}(t) \subseteq \mathcal{D}(t-1)$. Then, by \cref{cor:unmatched-support}, $f(x) \in \calFhat(t)$. Therefore,~$d(t)$ in the closed loop dynamics \cref{eq:unmatched-error} is contained in $\calDhat(t)$ since $\mathcal{D}(t)$ is symmetric. In addition, since both $\calFhat(t) \subseteq \calFhat(t-1)$ and~$\calD(t) \subseteq \calD(t-1)$, the support of $d(t)$ is nested over time, i.e., $\calDhat(t) \subseteq \calDhat(t-1)$.
\end{proof}
\begin{remark}\label{rem:proj}
We could consider multiple variations on the bounds in Lemmas \ref{lem:unmatched-support}, \ref{lem:unmatched-disturbance} that would yield equivalent properties of the closed loop system. 
For example, if a bound on the true range of values of $f$ is known a priori, we may project $\hat{f}$ into a known box enclosing the support of $f$. 
\end{remark}

Rather than search over open-loop input sequences, which can incur issues with feasibility and stability under disturbances, we consider the standard practice of searching over closed-loop feedback policies \cite{goulart_optimization_2006, BujarbaruahZhangEtAl2020, mayne_robust_2005}. In particular, we follow \cite{goulart_optimization_2006} in optimizing over time-varying, causal, affine disturbance feedback policies of the form 
\begin{equation}\label{eq:unmatched-causal-pol}
    u_{t+k|t} = \bar{u}_{t+k|t} + {\textstyle\sum_{j=0}^{k-1}}K_{k,j|t}d_{t+j|t},
\end{equation}
to define the constraint-tightened robust MPC problem that we solve online as follows:
\begin{equation}\label{eq:unmatched-mpc}
\begin{aligned}
    \minimize_{\substack{
        \{K_{k,j|t}\}_{k=0,j=0}^{N-1,k-1},\\
        \{\bar{u}_{t+k|t}\}_{k=0}^{N-1}
    }}\enspace
    &V_N(\bar{x}_{t+N|t}) + \sum_{k=0}^{N-1} h(\bar{x}_{t+k|t}, \bar{u}_{t+k|t})
    \\
    \subjectto\enspace
    & \bar{x}_{t+k+1|t} = A\bar{x}_{t+k|t} + B\bar{u}_{t+k|t}
    \\& x_{t+k+1|t} = Ax_{t+k|t} + Bu_{t+k|t} + d_{t+k|t}
    \\& u_{t+k|t} = \bar{u}_{t+k|t}
        + {\textstyle\sum_{j=0}^{k-1}}K_{k,j|t}d_{t+j|t}
    \\& x_{k|t} \in \mathcal{X},\ u_{t+k|t} \in \mathcal{U} \ominus B^\dagger\calFhat(t)
    \\& \forall k \in \{0,1,\dots,N-1\}
    \\& \bar{x}_{t|t} = x(t),\ x_{t|t} = x(t),\ x_{t+N|t} \in \mathcal{O}(t)
    \\& \forall \{d_{t+k|t}\}_{k=0}^{N-1} \subset \widehat{\mathcal{D}}(t)
\end{aligned}~.
\end{equation}
The problem \cref{eq:unmatched-mpc} optimizes a time-varying feedback policy with a cost on the nominal trajectory $(\bar{x}, \bar{u})$ subject to state and input constraints on the realized trajectory. We use the subscript $t+k|t$ for quantities at the $k$-th step of the prediction horizon when \cref{eq:unmatched-mpc} is solved online at time~$t \in \mathbb{N}_{\geq 0}$. If the function $V_N : \mathcal{X} \to \R$, the terminal set $\mathcal{O}$, and the disturbance set $\calDhat$ are convex, then \cref{eq:unmatched-mpc} is a convex problem and we refer the reader to~\cite{goulart_optimization_2006} for implementation details. One might also consider a formulation of \cref{eq:unmatched-mpc} where the feedback gains are fixed, yielding a more basic tube MPC as considered in \cite{mayne_robust_2005}.

Let the optimal value of \cref{eq:unmatched-robust-pol} be $J^\star_N(t, x(t))$, with associated optimal policy sequence $[u^\star_{t|t}, \dots, u^\star_{t+N-1|t}]$. We then choose the robust control term of the \emph{certainty equivalent} control policy \cref{eq:unmatched-ce-law} as the receding horizon feedback law $u^\star : \mathcal{X} \times \mathbb{N}_{\geq 0} \to \mathcal{U}$ such that 
\begin{equation}\label{eq:unmatched-robust-pol}
    u^\star(x(t),t) = u_{t|t}^\star.
\end{equation}
 In \cref{eq:unmatched-mpc} we have tightened the input constraints to account for the matching term in the certainty equivalent policy \cref{eq:unmatched-ce-law} when compared to a more standard robust MPC problem (i.e. see \cite{goulart_optimization_2006}). As is standard in robust MPC, we assume we can both compute a robust control invariant set $\mathcal{O}(t)$ and have access to a convex terminal cost function $V_N$.

\begin{assumption}\label{as:unmatched-termcost}
    The terminal cost $V_N: \R^n \to \R_+$ is a continuous convex Lyapunov function for the nominal dynamics under a policy $u_N(x) = - Kx$. That is, there exists a class-$\mathcal{K}_\infty$ function $\alpha_N(\|x\|) \geq h(x, u_N(x))$ for which
    \begin{equation}
        V_N((A- BK)x) - V_N(x) \leq -\alpha_N(\|x\|).
    \end{equation}
\end{assumption}

\begin{assumption}\label{as:unmatched-termconst}
For the policy $u_N(x) = - Kx$ in \cref{as:unmatched-termcost}, the terminal set $\mathcal{O}(t) \subseteq \mathcal{X}$ is a maximal robust positive invariant set for the closed loop system $x(t+1) = (A - BK)x(t) + d(t)$ for $d(t) \in \calDhat(t)$ subject to $x(t) \in \calX$ and $u_N(x(t)) \in \calU \ominus B^\dagger \calFhat(t)$ for all $t\geq 0$. 
\end{assumption}

\cref{as:unmatched-termcost} and \cref{as:unmatched-termconst} are standard and easily satisfied by taking $u_N(x) = -Kx$ and $V_N(x) = x^\top P x$ as the solution to an LQR problem with $h$ as the stage cost. Then, $\mathcal{O}(t)$ can be computed efficiently using the standard algorithms in \cite{borrelli_predictive_2017}. 

\Cref{lem:unmatched-disturbance,lem:unmatched-support} imply that $\mathcal{O}(t-1) \subseteq \mathcal{O}(t)$ since $\calDhat(t) \subseteq \calDhat(t-1)$ and $\calFhat(t) \subseteq \calFhat(t -1)$. Therefore, the terminal constraint becomes less conservative over time. 

\subsection{Stability}\label{sec:properties}
We prove the stability of our algorithm through a recursive feasibility and input-to-state stability argument.
\begin{theorem}\label{thm:unmatched-recfeas}
Consider the system \cref{eq:problem-dynamics}, a parameter estimator that satisfies \cref{as:shrinking} with features that satisfy \cref{as:bounded-feat} in closed loop feedback with the \emph{matching certainty equivalent} control law \cref{eq:unmatched-ce-law},\cref{eq:unmatched-robust-pol}. If the tube MPC problem \cref{eq:unmatched-mpc} is feasible at~$t=0$, then for all~$t\geq 0$ we have that \cref{eq:unmatched-mpc} is feasible and the closed loop system \cref{eq:problem-dynamics},\cref{eq:unmatched-ce-law},\cref{eq:unmatched-robust-pol} satisfies~$x(t) \in \calX$, and $\pi(x(t),t) \in \calU$.
\end{theorem}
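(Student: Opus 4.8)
The plan is to argue by induction on $t$ that the MPC problem \cref{eq:unmatched-mpc} is feasible, and then read off the closed-loop constraint claims directly from feasibility at each time step. As in the rest of \cref{sec:approach}, everything is carried out conditioned on the event $\hat w_i(t) - w_i \in \calW_i(t)$ for all $t$, so that Lemmas \ref{lem:unmatched-support} and \ref{lem:unmatched-disturbance} and Corollary \ref{cor:unmatched-support} apply verbatim; in particular the monotonicity facts $\calDhat(t+1)\subseteq\calDhat(t)$, $\calFhat(t+1)\subseteq\calFhat(t)$, and consequently $\mathcal{O}(t)\subseteq\mathcal{O}(t+1)$, are available. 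The base case $t=0$ is the hypothesis of the theorem. For the inductive step, suppose \cref{eq:unmatched-mpc} is feasible at time $t$ with a minimizer $\{\bar u^\star_{t+k|t}\}$, $\{K^\star_{k,j|t}\}$, and hence with $u^\star(x(t),t)=u^\star_{t|t}=\bar u^\star_{t|t}$; I will construct a feasible point of \cref{eq:unmatched-mpc} at time $t+1$.

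I would first extract the closed-loop consequences that do not need the shift argument. Feasibility at $t$ with $x_{t|t}=x(t)$ and the stage-$0$ state constraint gives $x(t)\in\calX$. For the input, $u^\star_{t|t}\in\calU\ominus B^\dagger\calFhat(t)$; since $\hat f(x(t),t)\in\calFhat(t)$ by Corollary \ref{cor:unmatched-support} and $\calFhat(t)$ is a box symmetric about the origin, $-B^\dagger\hat f(x(t),t)\in B^\dagger\calFhat(t)$, so by definition of the Pontryagin difference $\pi(x(t),t)=u^\star_{t|t}-B^\dagger\hat f(x(t),t)\in\calU$. Applying $\pi(x(t),t)$ to \cref{eq:problem-dynamics} gives, via \cref{eq:unmatched-error}, $x(t+1)=Ax(t)+Bu^\star_{t|t}+d(t)$ with $d(t)\in\calDhat(t)$ by Lemma \ref{lem:unmatched-disturbance}; moreover $d(t)=x(t+1)-Ax(t)-Bu^\star_{t|t}$ is computable at time $t+1$, which is what makes the disturbance-feedback shift legitimate.

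The candidate at time $t+1$ is the usual \emph{shift-and-append} construction for disturbance-feedback policies \cite{goulart_optimization_2006}: for stages $k=0,\dots,N-2$ set $K_{k,j|t+1}=K^\star_{k+1,j+1|t}$ and fold the now-realized disturbance into the affine term, $\bar u_{t+1+k|t+1}=\bar u^\star_{t+1+k|t}+K^\star_{k+1,0|t}\,d(t)$; for the last stage $k=N-1$ apply the terminal controller $u_N(x)=-Kx$ of \cref{as:unmatched-termcost} (which is representable in the class \cref{eq:unmatched-causal-pol}). With this choice the $t+1$ predicted trajectory, for any realization $\{d_{t+1+j|t+1}\}\subset\calDhat(t+1)$, is exactly the tail of the $t$ predicted trajectory corresponding to $d_{t|t}=d(t)$ and $\{d_{t+1+j|t}\}\subset\calDhat(t+1)\subseteq\calDhat(t)$; in particular $x_{t+1|t+1}=x(t+1)$ so the initial-condition constraints hold. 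The transferred constraints then follow from the nestings: for $k=0,\dots,N-2$ we inherit $x_{t+1+k|t+1}\in\calX$ and $u_{t+1+k|t+1}\in\calU\ominus B^\dagger\calFhat(t)\subseteq\calU\ominus B^\dagger\calFhat(t+1)$, and the time-$t$ terminal constraint gives $x_{t+N|t+1}\in\mathcal{O}(t)$; for the appended stage, $x_{t+N|t+1}\in\mathcal{O}(t)$ with $d_{t+N|t+1}\in\calDhat(t+1)\subseteq\calDhat(t)$ and \cref{as:unmatched-termconst} yield $x_{t+1+N|t+1}=(A-BK)x_{t+N|t+1}+d_{t+N|t+1}\in\mathcal{O}(t)\subseteq\mathcal{O}(t+1)$, while $-Kx_{t+N|t+1}\in\calU\ominus B^\dagger\calFhat(t)\subseteq\calU\ominus B^\dagger\calFhat(t+1)$ and $x_{t+N|t+1}\in\mathcal{O}(t)\subseteq\calX$. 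Hence \cref{eq:unmatched-mpc} is feasible at $t+1$, closing the induction; applying the second paragraph at every $t$ gives $x(t)\in\calX$ and $\pi(x(t),t)\in\calU$ for all $t$.

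The main obstacle is bookkeeping around the disturbance-feedback shift rather than any deep estimate: one has to notice that the realized disturbance $d(t)$ must be absorbed into the nominal input sequence (not the feedback gains) of the shifted policy, and then verify that the resulting $t+1$ predicted trajectories are literally tails of $t$ predicted trajectories so that every tightened state, input, and terminal constraint transfers. The only place where the adaptive, time-varying nature of the problem enters — as opposed to classical tube MPC with fixed tubes — is that each of these transfers is one-directional, which is precisely why the monotonicity $\calDhat(t+1)\subseteq\calDhat(t)$, $\calFhat(t+1)\subseteq\calFhat(t)$, $\mathcal{O}(t)\subseteq\mathcal{O}(t+1)$ established after Lemma \ref{lem:unmatched-disturbance} is essential.
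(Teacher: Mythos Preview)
Your proposal is correct and follows essentially the same inductive shift-and-append argument as the paper's proof, using the monotonicity $\calDhat(t+1)\subseteq\calDhat(t)$, $\calFhat(t+1)\subseteq\calFhat(t)$, $\mathcal{O}(t)\subseteq\mathcal{O}(t+1)$ to transfer constraints. You are in fact more explicit than the paper about the disturbance-feedback bookkeeping (absorbing the realized $d(t)$ into the shifted nominal input so that the $t+1$ predictions are tails of $t$ predictions) and about why the Pontryagin difference yields $\pi(x(t),t)\in\calU$; the paper states these steps more tersely but the substance is identical.
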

\begin{proof}
Suppose the optimal control problem \cref{eq:unmatched-mpc} is feasible at time $t$, with solution~$[u^\star_{t|t}(\cdot), \dots, u^\star_{t+N-1|t}(\cdot)]$. By \cref{lem:unmatched-support} and \cref{cor:unmatched-support} we have that $\hat{f}(x, t +k) \in \calFhat(t)$ for all $k \in \mathbb{N}_{\geq 0}$. Therefore, the time-varying CE control law
\begin{align}\label{eq:proof-pol}
    \pi_{t+k|t}(\cdot) = u^\star_{t+k|t}(\cdot) - B^\dagger \hat{f}(x(t), t + k)
\end{align}
satisfies the input constraints for $t \in [t, t+N-1]$, since \cref{eq:unmatched-mpc} then implies $u^\star_{t+k|t}(\cdot) \in \calU \ominus B^\dagger \calFhat(t)$. Moreover, by \cref{lem:unmatched-disturbance} the disturbance support shrinks in time, i.e., $\calDhat(t+1) \subseteq \calDhat(t) $. Therefore, we have that under policy \cref{eq:proof-pol} the closed-loop trajectory formed by \cref{eq:problem-dynamics},\cref{eq:proof-pol} satisfies $x(t+k) \in \calX$ for all $k \in [0,N]$ and that $x(t+N) \in \mathcal{O}(t)$. Hence, if we apply the CE policy \cref{eq:unmatched-ce-law},\cref{eq:unmatched-robust-pol} at time $t$, then $x(t+1) \in \calX$ and $\pi(x(t), t) \in \calU$.

By \cref{as:unmatched-termconst}, for any $x \in \mathcal{O}(t) \subseteq \mathcal{O}(t+1)$, applying the policy $u_N(x) \in \calU \ominus B^\dagger \calFhat(t)$ implies that $Ax + Bu_N(x) + d \in \mathcal{O}(t+1)$ for any $d \in \calDhat(t)$. Therefore, \cref{cor:unmatched-support} and \cref{lem:unmatched-disturbance} imply the policy sequence $[u^\star_{t+1|t}(\cdot), \dots u^\star_{t+N-1|t}(\cdot), u_N(\cdot)]$ is feasible for the tube MPC problem \cref{eq:unmatched-mpc} at time $t+1$. Therefore, if the MPC program \cref{eq:unmatched-mpc} is feasible at time $t=0$, it is also feasible for all $t\geq0$ and the closed-loop system formed by the matching CE law \cref{eq:problem-dynamics}, \cref{eq:unmatched-ce-law}, \cref{eq:unmatched-robust-pol} must robustly satisfy state and input constraints by induction.
\end{proof}

\begin{remark}
\cref{thm:unmatched-recfeas} and \cref{as:unmatched-termconst} imply that the maximal RPI set $\mathcal{O}(t)$ or disturbance sets $\calDhat(t), \calFhat(t)$ need not be updated at every timestep to guarantee recursive feasibility (nor stability), so we can apply this algorithm in an iterative setting and update the constraints only between episodes.
\end{remark}

\begin{theorem}\label{thm:unmatched-iss}
Consider a system of the form in \cref{eq:problem-dynamics}, 
a parameter estimator that satisfies \cref{as:shrinking} with features that satisfy \cref{as:bounded-feat} in closed-loop feedback with the \emph{certainty equivalent} control law \cref{eq:unmatched-ce-law},\cref{eq:unmatched-robust-pol}. Let $\calX_N \subseteq \calX$ denote the set of states for which the tube MPC problem \cref{eq:unmatched-mpc} is feasible. Then the closed loop system is locally input-to-state stable with region of attraction $\calX_N$. 
\end{theorem}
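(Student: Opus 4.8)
The plan is to exhibit the optimal value function $J_N^\star(t,\cdot)$ of the tube MPC problem \cref{eq:unmatched-mpc} as a (local, time-varying) ISS-Lyapunov function on $\calX_N$, treating the compound disturbance $d(t)$ of \cref{eq:unmatched-error} as the exogenous input, and then to invoke the local version of \cref{thm:iss}. By \cref{thm:unmatched-recfeas}, the set $\calX_N$ is positively invariant under the closed loop \cref{eq:problem-dynamics}, \cref{eq:unmatched-ce-law}, \cref{eq:unmatched-robust-pol} and \cref{eq:unmatched-mpc} stays feasible, so $J_N^\star(t,x(t))$ is well defined along every trajectory starting in $\calX_N$; it then remains to verify the three ISS-Lyapunov inequalities and to note that $d(t)$ is uniformly bounded.

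For the sandwich bounds: since $\bar x_{t|t} = x(t)$ and every term of the objective of \cref{eq:unmatched-mpc} is nonnegative, $J_N^\star(t,x(t)) \geq h(x(t),\bar u_{t|t}^\star) \geq \lmin{Q}\,\|x(t)\|^2 =: \alpha_1(\|x(t)\|)$, which is class-$\mathcal{K}_\infty$. For the upper bound I would exploit the monotone nesting from \cref{lem:unmatched-support}, \cref{cor:unmatched-support} and \cref{lem:unmatched-disturbance} — namely $\calDhat(t)\subseteq\calDhat(0)$, $\calU\ominus B^\dagger\calFhat(t)\supseteq\calU\ominus B^\dagger\calFhat(0)$ and $\mathcal{O}(t)\supseteq\mathcal{O}(0)$ — so that the feasible set of \cref{eq:unmatched-mpc} at any $t$ (for a fixed current state) contains that at $t=0$, giving $J_N^\star(t,x)\leq J_N^\star(0,x)$ for all $x\in\calX_N$ and $t\geq0$. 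The map $x\mapsto J_N^\star(0,x)$ is continuous (value function of a convex parametric program), vanishes at $x=0$ (the identically-zero nominal trajectory is feasible, since the origin lies in the tightened constraint sets and in $\mathcal{O}(0)$), and is bounded on the compact set $\calX_N$, hence can be over-bounded there by a class-$\mathcal{K}_\infty$ function $\alpha_2$ uniformly in $t$.

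The descent inequality is the core of the argument. Along the closed loop $u_{t|t}^\star = \bar u_{t|t}^\star$ (the $k=0$ term of \cref{eq:unmatched-causal-pol} is empty), so $x(t+1) = Ax(t)+B\bar u_{t|t}^\star + d(t) = \bar x_{t+1|t}^\star + d(t)$. I would use the standard shift-and-append candidate for time $t+1$ — the tail of the time-$t$ optimizer with the terminal controller $u_N(\cdot)=-K(\cdot)$ appended — which is feasible for \cref{eq:unmatched-mpc} at $t+1$ exactly by the construction in the proof of \cref{thm:unmatched-recfeas}. Its nominal trajectory equals the shifted time-$t$ nominal trajectory displaced by $A^{k}d(t)$ at the $k$-th step, so telescoping the nominal stage costs and using \cref{as:unmatched-termcost} to absorb the new terminal stage cost into the terminal-cost decrease yields
\begin{equation*}
    J_N^\star(t+1,x(t+1)) - J_N^\star(t,x(t)) \;\leq\; -h\big(x(t),\bar u_{t|t}^\star\big) + \sigma\big(\|d(t)\|\big) \;\leq\; -\alpha_1(\|x(t)\|) + \sigma(\|d(t)\|),
\end{equation*}
where $\sigma$ is a class-$\mathcal{K}$ bound on the cost gap induced by the displacement. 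With $\alpha_3 := \alpha_1$, this makes $J_N^\star$ an ISS-Lyapunov function on $\calX_N$, and the local form of \cref{thm:iss} delivers local input-to-state stability with region of attraction $\calX_N$. Since $d(t)\in\calDhat(t)\subseteq\calDhat(0)$ is bounded by the problem data, this in particular certifies that trajectories starting in $\calX_N$ remain bounded and converge to a neighborhood of the origin governed by the ISS gain evaluated at the diameter of $\calDhat(0)$.

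The hard part is producing $\sigma$: quantifying how much the cost of the shifted candidate can exceed that of the time-$t$ optimizer once the nominal prediction is displaced by $d(t)$. This needs a Lipschitz/continuity estimate for $h$ and $V_N$ — both quadratic, hence locally Lipschitz — together with the bookkeeping that, by compactness of $\calX$, $\calU$ and recursive feasibility, every trajectory visited by the candidate stays in a fixed compact set so that local Lipschitz constants apply; for $h(x,u)=x^\top Q x + u^\top R u$ and $V_N(x)=x^\top P x$ one obtains $\sigma(s) = c_1 s + c_2 s^2$ for suitable constants. A secondary subtlety, resolved by the nesting in \cref{lem:unmatched-disturbance}, is that $J_N^\star$ is genuinely time-varying: passing from $t$ to $t+1$ only relaxes \cref{eq:unmatched-mpc}, so no extra penalty appears when comparing $J_N^\star(t+1,\cdot)$ to $J_N^\star(t,\cdot)$, which is what lets the time-varying ISS-Lyapunov machinery of \cref{thm:iss} go through unchanged.
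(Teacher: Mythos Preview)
Your proposal is correct and follows essentially the same route as the paper: both use $J_N^\star(t,\cdot)$ as a time-varying ISS-Lyapunov function on $\calX_N$, obtain the sandwich bounds from the quadratic stage cost and the nesting of constraint sets, and establish the descent inequality via the shifted-and-appended candidate from the recursive feasibility proof, with the disturbance gap controlled by a continuity estimate. The only difference is packaging: the paper introduces an auxiliary function $\bar{J}(t,x)$ (the nominal cost-to-go under the shifted policy from initial state $x$), invokes its uniform continuity in $x$ to get a class-$\mathcal{K}_\infty$ modulus $\alpha_J$, and then writes $J_N^\star(t+1,x(t+1))-J_N^\star(t,x(t)) \leq \alpha_J(\|d(t)\|) - \alpha_3(\|x(t)\|)$; you instead track the explicit displacement $A^{k}d(t)$ of the shifted nominal trajectory and read off $\sigma(s)=c_1 s + c_2 s^2$ directly from the quadratic structure of $h$ and $V_N$. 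These are the same argument at two levels of abstraction.
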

\begin{proof}
Our proof closely resembles \cite[Thm.~2]{BujarbaruahZhangEtAl2020}. 
We argue that the nominal system is stable by a standard MPC argument, and that the closed-loop system is ISS since the disturbances are bounded.
Since we assume the stage cost is quadratic, there exist two class-$\mathcal{K}_\infty$ functions $\alpha_1, \alpha_2$ such that for all $t\geq0$, $\alpha_1(\|x\|) \leq J^\star_N(t,x) \leq  \alpha_2(\|x\|)$, a class-$\mathcal{K}_\infty$ function $\alpha_3$ such that $h(x,u) \geq \alpha_3(\|x\|)$, and  $J_N^\star(t,0)=0$ (see \cite[Prop.~1]{limon_input--state_2009}, \cite[Thm.~2]{BujarbaruahZhangEtAl2020}). Let $J^\star_N(t, x(t))$ be the solution of \cref{eq:unmatched-mpc} associated with the nominal prediction $[\bar{x}_{t|t}^\star, \dots, \bar{x}_{t+N|t}^\star]$ and feedback policies $[u^\star_{t|t}(\cdot), \dots, u^\star_{t+N-1|t}(\cdot)]$. As in the proof of \cref{thm:unmatched-recfeas}, we have that if we apply the CE control law \cref{eq:unmatched-ce-law},\cref{eq:unmatched-robust-pol} at time $t$, then the policies $[u^\star_{t+1|t}, \dots, u^\star_{t+N-1|t}, u_N]$ are a feasible solution for \cref{eq:unmatched-mpc} at time $t+1$. Let $\bar{J}(t, x)$ be the cost associated with forward simulating the nominal system using the policies $[u^\star_{t+1|t}, \dots, u^\star_{t+N-1|t}, u_N]$ with $x$ as initial condition.  i.e., set $u^\star_{t+N|t} = u_N(\cdot)$ and let $\bar{x}_{t+1|t+1} = x$, $\bar{x}_{k+1|t+1} = A\bar{x}_{k|t+1} + Bu^\star_{k|t}(\bar{x}_{k|t+1})$ for $k \in \{t+1, \dots, t+N\}$ so that
\begin{equation}
    \bar{J}(t,x) = \sum_{k=t+1}^{t + N}h(\bar{x}_{k|t+1}, u^\star_{k|t}(\bar x_{k|t+1})) + V_N(\bar{x}_{t+N+1|t+1}). \nonumber
\end{equation}
This gives that $J^\star_N(t+1, x(t+1)) \leq \bar{J}(t, x(t+1))$. Moreover, since the stage cost is quadratic and by \cref{as:unmatched-termcost}, $\bar{J}(t,x)$ is uniformly continuous in $x$ for all $t\geq 0$ on the state space since the inputs are constrained in a compact set. It follows that for $x_1, x_2 \in \calX$, there exists a $\mathcal{K}_\infty$ function $\alpha_J$ such that for all $t\geq0$, $|\bar{J}(t,x_1) - \bar{J}(t,x_2)| \leq \alpha_J(\|x_1 - x_2\|)$ (see \cite[Lem.~1]{limon_input--state_2009}). Therefore, 
\begin{equation}\begin{aligned}
        &J^\star_N(t + 1, x(t+1)) - J^\star_N(t, x(t)) 
        \\
        &\leq \bar{J}(t, x(t+1)) - J^\star_N(t, x(t)) 
        \\
        &= \bar{J}(t, x(t+1)) - \bar{J}(t, \bar{x}^\star_{t+1|t}) + \bar{J}(t, \bar{x}^\star_{t+1|t}) - J^\star_N(t, x(t)) 
        \\
        &\leq |\bar{J}(t, x(t+1)) - \bar{J}(t, \bar{x}^\star_{t+1|t})|
            - h(x_t, \bar{u}^\star_{t|t}(x(t))) 
        \\
        &\leq \alpha_J(\|d(t)\|) - \alpha_3(\|x(t)\|).
\end{aligned} \nonumber
\end{equation}
So, the system is ISS by \cref{thm:iss}.
\end{proof}
\begin{remark}
The ISS result in \cref{thm:unmatched-iss} does not explicitly show that improvements in the confidence of the model lead to better performance of the controller, since we only assume the model confidence is non-decreasing in \cref{as:shrinking}. 
\end{remark}
In adaptive control, stronger guarantees of performance improvement are typically made under \emph{persistence of excitation} assumptions \cite{slotine_applied_1991}.

\section{Adaptation Laws \& Learning Algorithms}\label{sec:est}
In this section we highlight two common and perhaps complimentary online function approximation schemes, one is statistical and one is not, that satisfy the decaying confidence interval of \cref{as:shrinking} that we used to construct our ARMPC algorithm.

\textbf{Set Membership Estimation.} 
A common approach in the adaptive MPC literature is to estimate constant, or slowly changing, disturbances through \emph{set membership} estimation \cite{bujarbaruah_adaptive_2018, KohlerAndinaEtAl2019}. These estimators maintain a feasible parameter set that is refined as more data becomes available. The feasible parameter set contains all credible model parameters that explain previous observations, which means that the feasible parameter sets are nested over time. We consider learning the parameters of a nonlinear uncertainty model of the form in \cref{eq:func-approx} directly using set-membership estimation. Under the prior knowledge \cref{as:prior-conf}, the initial feasible parameter set is given as $\Theta(0) = \{\what(0)\} \oplus \calW(0)$ and the feasible parameter set at time $t$ is obtained as
\begin{equation}\label{eq:est-setmember-fps}
\begin{aligned}
    \Theta(t) = &\big\{W \in \Theta :\,
        x(k+1) - Ax(k) - Bu(k)  \\
        &- W\phi(x(k)) \in \calV,\, \forall k \in \{0, \dots, t-1\}
    \big\}.
\end{aligned}\end{equation}
When $\calV$ is a hyperbox, this estimator maintains independent feasible sets for each row of $W$ and can be updated recursively in time with polytopical set intersections by rewriting \cref{eq:est-setmember-fps} in terms of the row-wise vectorization of $W$. 
Clearly, $\Theta(t) \subseteq \Theta(t-1)$. As is common practice in the literature \cite{bujarbaruah_semi-definite_2020}, we propose generating a point estimate of the parameters as the \emph{Chebyshev center} of the feasible parameter set:
\begin{align}\label{eq:est-chebycenter}
    \what(t) = \arg \min_{\what}\max_{W \in \Theta(t)} \|\what - W\|_F.
\end{align}
By definition, this approach minimizes the worst-case error of the point estimates and is typically straightforward to compute \cite{boyd_cvx_2004}. Denoting the Chebyshev radius for the feasible parameter set associated with the $i$-th row of $W$ as $r_i(t) = \min_{w} \max_{w_i \in \Theta_i(t)} \|w - w_i\|_2$, we take the confidence interval on $\hat w_i(t) - w_i$ as $\calW_i(t) = \{\tilde w_i : \|\tilde w_i\|_2 \leq r_i(t)\}$. 

By definition, since $\Theta(t) \subseteq \Theta(t-1)$, the Chebyshev radii must be decreasing over time: $r_i(t) \leq r_i(t-1)$. Therefore, a set-membership estimator with point estimates as the Chebyshev center satisfies \cref{as:shrinking}. 

\textbf{Recursive Bayesian Linear Regression (BLR).}
In the case of Bayesian estimation, we can generate confidence intervals directly from the posterior distribution over parameters if we know the disturbance distribution. We outline this approach under a simple, standard assumption.
\begin{assumption}\label{as:est-subgaussian-noise}
We assume that each entry of the process noise is bounded $v(t) = [v_1(t), \dots,v_n(t)]^\top \in \calV = \{v : |v| \leq \sigma_i\}$, and that each entry $v_i(t)$ is independent of the others. Hence, $v_i(t)$ is sub-Gaussian with variance proxy $\sigma_i^2$.  
\end{assumption}
Under \cref{as:est-subgaussian-noise}, we can essentially treat the noise as both normally distributed for convenient analysis and provide safety guarantees for the algorithm proposed in \cref{sec:approach}. If we place subjective priors over the rows of $W$ of the form $w_i \sim \mathcal{N}(\hat{w}_i(0), \sigma_i^2 \Lambda_i^{-1}(0))$, then the resulting posterior parameter distribution at time $t$ is also Gaussian, $w_i \sim \mathcal{N}(\hat{w}_i(t), \sigma_i^2 \Lambda_i^{-1}(t))$. We then use the mean of the posterior---also corresponding to the maximum a posteriori (MAP) estimate---as a point estimate for control: $\hat{f}(x,t) = \what(t)\phi(x)$. Defining the measurement and prediction at time $t$ as $y(t) := x(t+1) - Ax(t) - Bu(t)$ and $\hat{y}(t):= \what(t) \phi(t)$, the MAP estimate for each row can then be updated with constant complexity in time using the recursive updates   
\begin{align}
    \hat{w}_i(t+1) &= \hat{w}_i(t) - \frac{(\hat{y}_i(t)- y_i(t))\phi(t)^\top \Lambda_i^{-1}(t)}{1 + \phi(t)^\top \Lambda_i^{-1}(t) \phi(t)}  \label{eq:predict}\\[1em]
    \Lambda_i^{-1}(t+1) &= \Lambda_i^{-1}(t) - \frac{\Lambda_i^{-1}(t)\phi(t) \phi(t)^\top \Lambda_i^{-1}(t)}{1 + \phi(t)^\top \Lambda_i^{-1}(t) \phi(t)}, \label{eq:update}
\end{align}
where we write each entry of $\hat{y}(t)$ as $\hat{y}_i(t) = \hat{w}_i(t)^\top \phi(t)$ and $\phi(t) := \phi(x(t))$. 
We can recover the frequentist ordinary least-squares estimator if we assume a flat prior \cite{deisenroth2020mathematics}, which requires the availability of some amount of prior data to yield the initial values $\what(0)$, $\Lambda(0)$. 
        
Taking a risk tolerance of $\delta \in (0,1)$, we could naively define the confidence interval for the $i$-th row of $\what(t)$ as 
\begin{align}\label{eq:est-lstsq-naive}
    \calW^{\mathrm{naive}}_i(t) := \{\tilde w_i \in \R^n : \tilde w_i^\top \Lambda_i(t) \tilde w_i \leq \sigma_i^2 \chi^2_n(1-\frac{\delta}{n})\},
\end{align}
where $\chi_n^2(1-\delta)$ is the $1 - \delta$ quantile of the chi-square distribution with $n$ degrees of freedom. However, the confidence interval in \cref{eq:est-lstsq-naive} does not capture the fact that we want to certify the safety of the policy for all time with high probability. We cannot achieve this with a single confidence interval of a point estimate at time $t$, as \cref{eq:est-lstsq-naive} ignores the correlations between the model estimates over time. For robust control, we instead desire confidence intervals such that
\begin{equation}\label{eq:est-lstsq-conf}
    \what(t) - W \in \calW(t),\ \forall t \in \mathbb{N}_{\geq 0},
\end{equation}
with probability at least $1 - \delta$.

Recent work applied a Martingale argument originating from the Bandits literature to generate such confidence intervals \cite{lew_safe_2020} by scaling the naive confidence intervals by a time-varying parameter\footnote{Subject to assumptions on the calibration of the prior, for which we refer the reader to \cite[Assumption 3]{lew_safe_2020}. This assumption is trivially satisfied for flat priors, and we assume this assumption is satisfied for subjective priors.}. 

\begin{theorem}\label{thm:thomas}
\cite[Thm~1]{lew_safe_2020} For the recursive Bayesian linear filter \cref{eq:predict},\cref{eq:update}, we have the estimation error $\hat{w}_i(t) - w_i \in \calW_i(t)$ for all $t\geq 0$ with probability at least $1-\delta$, where
\begin{equation}\label{eq:bls-conf}
    \calW_i(t) 
    := \{\tilde w_i : (\tilde w_i^\top \Lambda_i(t) \tilde w_i)^{\frac{1}{2}} \leq \sigma_i \beta_t(\delta /n)\},
\end{equation}
with $\beta_t(\delta)$ equal to
\begin{equation}
    \sqrt{2 \log \Big(\frac{\det(\Lambda_i(t))^{1/2}}{\delta \det(\Lambda_i(0))^{1/2}}\Big)} + \sqrt{\frac{\lambda_\mathrm{max}(\Lambda_i(0))}{\lambda_\mathrm{min}(\Lambda_i(t))} \chi^2_n(1-\delta)}.
\end{equation}
\end{theorem}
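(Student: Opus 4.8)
The plan is to reproduce the self-normalized martingale argument (the ``method of mixtures'' from the linear bandit literature, e.g.\ Abbasi-Yadkori, P\'al, and Szepesv\'ari), modified to carry an informative prior, as in \cite{lew_safe_2020}. Because the rows of $W$ are estimated independently, I would fix a row index $i$ and treat the scalar regression $y_i(k) = w_i^\top\phi(k) + v_i(k)$, with $\phi(k) = \phi(x(k))$ predictable with respect to $\mathcal{F}_k := \sigma\big(x(0),v_i(0),\dots,v_i(k-1)\big)$ and $v_i(k)\mid\mathcal{F}_k$ zero-mean $\sigma_i$-sub-Gaussian under \cref{as:est-subgaussian-noise}. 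The first step is routine algebra: verify that \cref{eq:update} is the Sherman--Morrison form of $\Lambda_i(t+1) = \Lambda_i(t) + \phi(t)\phi(t)^\top$ and that \cref{eq:predict} is the matching posterior-mean recursion, so that $\hat w_i(t) = \Lambda_i(t)^{-1}\big(\Lambda_i(0)\hat w_i(0) + \sum_{k=0}^{t-1}\phi(k)y_i(k)\big)$. Substituting the model gives the decomposition
\[
 \hat w_i(t) - w_i = \Lambda_i(t)^{-1}\Big(S_t + \Lambda_i(0)\big(\hat w_i(0) - w_i\big)\Big), \qquad S_t := \sum\nolimits_{k=0}^{t-1}\phi(k)v_i(k),
\]
whence, by the triangle inequality in the $\Lambda_i(t)$-norm,
\[
 \big\|\hat w_i(t) - w_i\big\|_{\Lambda_i(t)} \;\le\; \|S_t\|_{\Lambda_i(t)^{-1}} \;+\; \big\|\Lambda_i(0)(\hat w_i(0) - w_i)\big\|_{\Lambda_i(t)^{-1}}.
\]

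For the prior term, under the subjective prior $w_i \sim \mathcal{N}(\hat w_i(0), \sigma_i^2\Lambda_i(0)^{-1})$ we have $\sigma_i^{-2}\|\hat w_i(0)-w_i\|_{\Lambda_i(0)}^2 \sim \chi^2_n$, so with probability at least $1-\delta'$ it is at most $\sigma_i^2\chi^2_n(1-\delta')$; combined with $\|\Lambda_i(0)z\|_{\Lambda_i(t)^{-1}}^2 = z^\top\Lambda_i(0)\Lambda_i(t)^{-1}\Lambda_i(0)z \le \lmax{\Lambda_i(0)}\,\lmin{\Lambda_i(t)}^{-1}\|z\|_{\Lambda_i(0)}^2$ this bounds the prior term by $\sigma_i\sqrt{\lmax{\Lambda_i(0)}/\lmin{\Lambda_i(t)}}\,\sqrt{\chi^2_n(1-\delta')}$ uniformly in $t$ on a single event, which is exactly the second summand of $\beta_t$. (The prior-calibration hypothesis cited in the footnote is precisely what licenses the $\chi^2_n$ statement for a subjective prior, and is vacuous for a flat prior.)

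The core of the proof is an \emph{anytime} bound on $\|S_t\|_{\Lambda_i(t)^{-1}}$. For each fixed $\lambda \in \R^d$, set $M_t^\lambda := \exp\!\big(\sigma_i^{-2}\lambda^\top S_t - \tfrac12\sigma_i^{-2}\lambda^\top V_t\lambda\big)$ with $V_t := \sum_{k=0}^{t-1}\phi(k)\phi(k)^\top = \Lambda_i(t) - \Lambda_i(0)$; the conditional sub-Gaussian MGF bound on $v_i(k)$ together with the predictability of $\phi(k)$ makes $\{M_t^\lambda\}_{t\ge0}$ a nonnegative supermartingale with $M_0^\lambda = 1$. Mixing over $\lambda \sim \mathcal{N}(0,\sigma_i^2\Lambda_i(0)^{-1})$ and integrating the Gaussian in closed form gives, via Fubini, the mixed nonnegative supermartingale $\bar M_t = \big(\det\Lambda_i(0)/\det\Lambda_i(t)\big)^{1/2}\exp\!\big(\tfrac12\sigma_i^{-2}\|S_t\|_{\Lambda_i(t)^{-1}}^2\big)$ with $\bar M_0 = 1$. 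Applying Ville's maximal inequality, $\p\big(\exists t \ge 0: \bar M_t \ge 1/\delta'\big) \le \delta'$, and rearranging the complement event yields $\|S_t\|_{\Lambda_i(t)^{-1}} \le \sigma_i\sqrt{2\log\big(\det(\Lambda_i(t))^{1/2}/(\delta'\det(\Lambda_i(0))^{1/2})\big)}$ simultaneously for all $t$, which is the first summand of $\beta_t$. Taking $\delta'$ to be the appropriate fraction of $\delta/n$, a union bound over the martingale event and the prior event gives $\|\hat w_i(t) - w_i\|_{\Lambda_i(t)} \le \sigma_i\beta_t(\delta/n)$ for all $t$, and a final union bound over the $n$ rows (absorbed into the $\delta/n$) yields the joint statement used in \cref{as:shrinking}.

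The step I expect to be the main obstacle is constructing the mixed supermartingale and invoking Ville's inequality correctly: one must (i) verify the conditional MGF control of $v_i(k)$ and the predictability of $\phi(k)$ so that each $M_t^\lambda$ is genuinely a supermartingale despite the noise being merely sub-Gaussian rather than Gaussian, (ii) justify interchanging expectation with the Gaussian integral over $\lambda$, and (iii) keep the ``design'' precision $\Lambda_i(t)$ appearing in the norm consistent with $V_t + \Lambda_i(0)$ throughout the completion-of-the-square computation. The remaining pieces — the $\chi^2_n$ concentration of the prior term, the eigenvalue inequality, and the two union bounds — are routine.
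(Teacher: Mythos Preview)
The paper does not prove this theorem; it is quoted directly from \cite[Thm.~1]{lew_safe_2020} and used as a black box, so there is no in-paper proof to compare against. Your proposal is the correct argument and is precisely the route taken in the cited reference: decompose $\hat w_i(t)-w_i$ into the self-normalized noise term $\Lambda_i(t)^{-1}S_t$ and the regularization/prior bias term $\Lambda_i(t)^{-1}\Lambda_i(0)(\hat w_i(0)-w_i)$, control the first uniformly in $t$ via the method-of-mixtures supermartingale and Ville's inequality (this is exactly the Abbasi-Yadkori--P\'al--Szepesv\'ari bound with design precision $\Lambda_i(t)=\Lambda_i(0)+V_t$), and control the second via the $\chi^2$ concentration of the prior and the eigenvalue sandwich. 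Two small bookkeeping notes: (i) the chi-square degree-of-freedom should be the feature dimension $d$ rather than the state dimension $n$ (the paper's $\chi^2_n$ appears to inherit the notation of \cite{lew_safe_2020}); (ii) as you implicitly observe, allotting $\delta/n$ to \emph{each} of the martingale and prior events per row yields total failure probability $2\delta$ rather than $\delta$, so a clean statement would split the per-row budget in half. Neither affects the structure of the argument.
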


The confidence intervals resulting from \cref{thm:thomas} unfortunately do not immediately satisfy \cref{as:shrinking} without a \emph{persistence of excitation} or \emph{active exploration} assumption as is made in \cite{mania2020active}. A simple workaround is to update the estimate \cref{eq:predict} fed into to the controller only when the associated confidence intervals \cref{eq:bls-conf} have shrunk, effectively disregarding new data until the system has been excited sufficiently. This approach was shown to perform well in practice \cite{koller_learning-based_2018}. Still, future work should explore strategies to guarantee confidence intervals constructed using \cref{thm:thomas} (or other equivalent results) satisfy \cref{as:shrinking} more naturally.  

\section{Experiments}\label{sec:exp}
To benchmark our adaptive, robust MPC (ARMPC) algorithm that \emph{matches} as much of the uncertainty as possible, we compare it against the normative approach in ARMPC; we estimate the range of values that the uncertainty $f$ can take and naively treat it as a disturbance using tube MPC as in \cite{bujarbaruah_adaptive_2018, bujarbaruah_semi-definite_2020, soloperto_learning-based_2018, KohlerAndinaEtAl2019}. To the best of our knowledge, such algorithms have not been proposed for uncertain terms that satisfy \cref{as:bounded-feat} in the literature. Therefore, we apply the tube MPC strategy in \cite{goulart_optimization_2006} online using the disturbance set $\calV'(t) = \calFhat(t) \oplus \calV$, since \cref{cor:unmatched-support} implies that both $f(x(t)) + v(t) \in \calV'(t)$ and $\calV'(t) \subseteq \calV'(t-1)$ for all $t\geq0$. 

\begin{figure}[t]
    \centering
        \includegraphics[width=\linewidth]{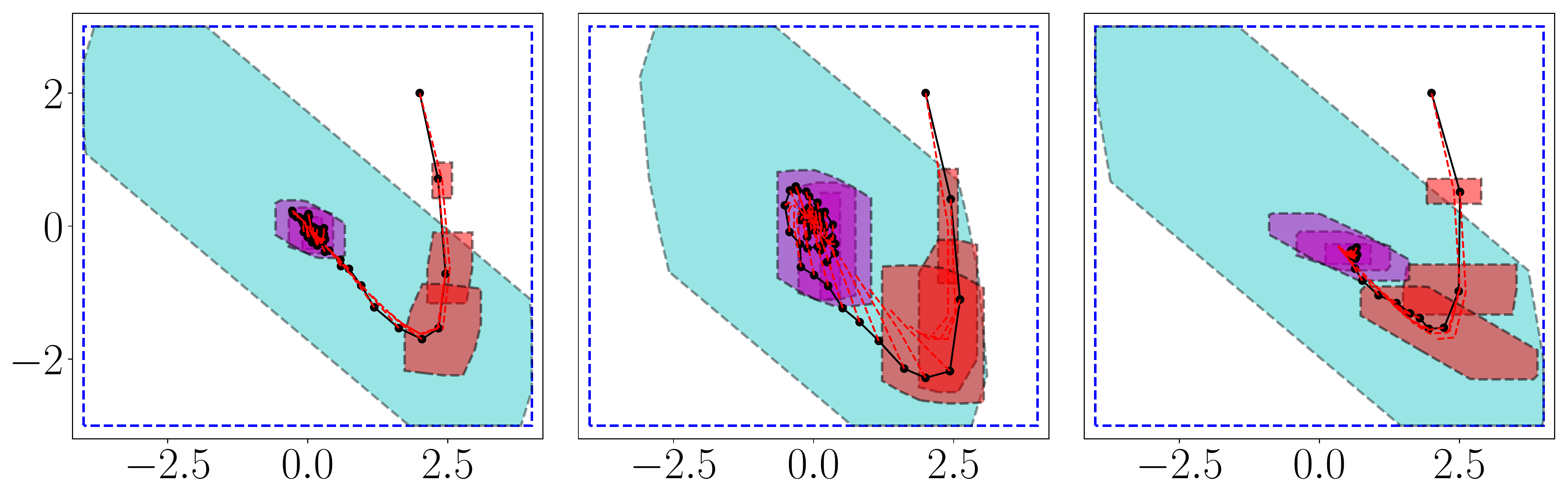}
    \caption{Closed-loop trajectories of different systems. The true system evolution is in black, the predicted nominal trajectories are red, as are the predicted reachable sets at $t=0$. The predicted reachable sets at $t=50$ are in purple. The dark blue line indicates the state constraints. The terminal invariant $\mathcal{O}(50)$ is shown in cyan. Left: Adaptive CE MPC on the system with \emph{matched} uncertainty \cref{eq:exp-matched-dyn}. Middle: Benchmark Adaptive MPC on the \emph{matched} system \cref{eq:exp-matched-dyn}. Right:  Adaptive CE MPC on the system with \emph{unmatched} uncertainty \cref{eq:exp-unmatched}.}  
    \label{fig:trajs}
\end{figure}

\textbf{Simple Matched System.}\label{sec:exp-matched}
We illustrate the properties of our algorithm on a double-integrator system
\begin{equation}\label{eq:exp-matched-dyn}
    x(t+1) = \begin{bmatrix}1 & 0.2 \\ 0 & 1 \end{bmatrix}x(t) + \begin{bmatrix}0 \\ 1\end{bmatrix}u(t) + f(x(t)) + v(t),
\end{equation}
a typical example in the MPC literature that represents simplified second-order dynamics \cite{borrelli_predictive_2017}. First, we consider the \emph{matched} uncertainty $f(x) =  [0, w_1]^\top \tanh([0,1]x)$. We estimate the true parameter, $w_1 = 0.5$, online to improve performance. We take the disturbance as an isotropic Gaussian with $\sigma^2 = 5 \times 10^{-3}$ clipped at its 95\% confidence intervals.
In addition, the system is subject to the state and input constraints ${(-4,-3) \preceq x \preceq (4,3)}$ and $-2 \leq u \leq 2$, respectively. We regulate the system to the origin from ${x(0) = [2, 2]^\top}$ while minimizing the quadratic cost function $h(x, u) = x^\top Q x + u^\top Ru$ over a horizon of length $N=3$. We take $Q = I_2$, $R = 1$. We use the BLR estimator with a flat prior and collect $k = 45$ data points of the system evolution in a unit box near the origin to form an initial estimate of the model parameters and construct $\mathcal{O}(t)$ using Alg. 10.5 in \cite{borrelli_predictive_2017}. 

We plot the closed-loop system evolution in \cref{fig:trajs} (left, middle). Our adaptive certainty-equivalent MPC algorithm is able to effectively control the system. In contrast, the benchmark ARMPC can only react to the learned dynamics after $f$ enters the system as a disturbance, resulting in considerably larger closed-loop oscillations and uncertainty on the predicted trajectory. In \cref{fig:trajs}, we also plot the reachable sets associated with the predicted trajectory at the first and last timesteps of the control task. The reachable sets show that our adaptive MPC resolves prediction uncertainty in the system since they shrink over time. Moreover, online learning has little consequence for the benchmark, as increasing the confidence in the model does not significantly reduce the estimated range of values that the nonlinear function takes.

In addition, we compare the asymptotic performance of our algorithm with the benchmark ARMPC as a function of~$w_1$, controlling the magnitude of the nonlinearity. To do this, we set the number of data points used to generate the initial model estimate to~$k=10^3$ and collect trajectory rollouts for various values of~$w_1$ from the fixed initial condition. As \cref{fig:cost-envelopes} (left) shows, our ARMPC is feasible for disturbances more than twice the magnitude of those the benchmark ARMPC can tolerate for the given initial condition. In addition, the realized control cost does not differ significantly from the benchmark for values of $w_1$ when both controllers are feasible.  

Finally, we illustrate how our algorithm tolerates larger dynamic uncertainty by comparing the size of the feasible envelope (i.e., the set of initial conditions for which the MPC problem is feasible) as a function of~$w_1$. We set the number of data points to inform our prior to a modest~$k=50$, grid the state-space, and take the feasible region as the convex hull of the initial conditions for which the MPC problem is feasible. Then, we estimate the percentage of states ~$x_0 \in \calX$ in the feasible envelope as the ratio of volumes between the feasible envelope and the state space~$\calX$, illustrated with solid lines in \cref{fig:cost-envelopes} (right). Our adaptive MPC algorithm can tolerate much larger disturbances than the benchmark ARMPC. In these experiments, the feasible envelope of the benchmark becomes empty when the maximal robust invariant is null ($\mathcal{O}(t)=\emptyset$), indicating that there is no subset of~$\calX$ in which the LQR policy associated with~$h$ results in provably safe behavior~\cite{borrelli_predictive_2017}. Hence, \cref{fig:cost-envelopes} highlights the fragility of existing ARMPC approaches under large disturbances.

\begin{figure}[t]
    \centering
        \includegraphics[width=\linewidth]{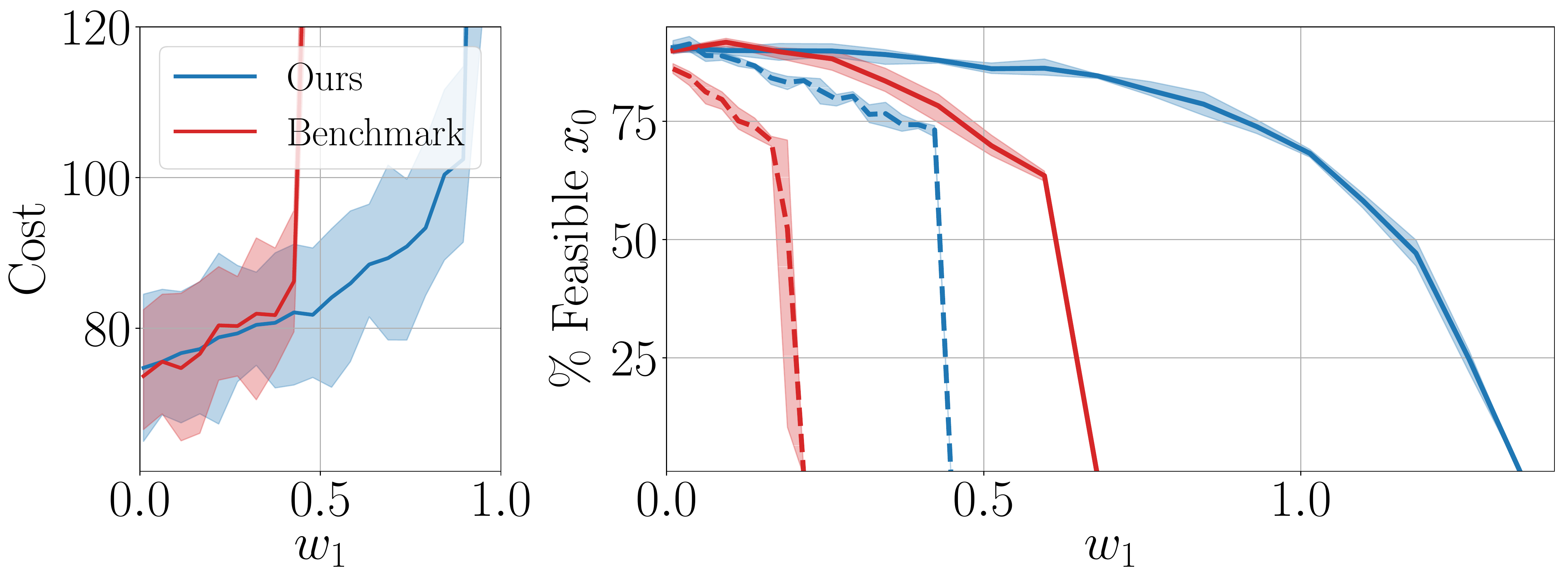}
    \caption{Left: Closed-loop realized trajectory cost for the matched system \cref{eq:exp-matched-dyn} as a function of~$w_1$. Exploding cost indicates infeasibility. The error bars indicate~${2\sigma}$ bounds. Right: Solid lines indicate size of feasible envelopes as a function of $w_1$ for a matched system. Dashed lines indicate the size of the feasible envelopes for the unmatched system \cref{eq:exp-unmatched} as a function of the magnitude of the unmatched dynamics $w_1$ with $w_2=0.5$.}
    \label{fig:cost-envelopes}
\end{figure}

\textbf{Simple Unmatched System.} We now extend the simulations of the simple matched system to the unmatched case to understand the effect of additive nonlinear terms that cannot be canceled from the dynamics. We keep the nominal dynamics identical to \cref{eq:exp-matched-dyn} and take the nonlinear dynamics
\begin{equation}\label{eq:exp-unmatched}
    f(x) = \frac{1}{\sqrt{2}}[w_1 \sin(4x_1), w_2 \tanh(x_2)]^\top,
\end{equation}
where~$w_1$ and~$w_2$ are unknown parameters. Similar to the previous experiments, we initialize the model with~$k=45$ data points sampled around the origin. In this example, the certainty equivalent policy \cref{eq:unmatched-ce-law} can only compensate for the second component of the nonlinear dynamics \cref{eq:exp-unmatched}. We set $w_1 = 0.2$ and $w_2=0.3$. In \cref{fig:trajs} (right), the size of the reachable sets increases if we simulate the system with the unmatched dynamics \cref{eq:exp-unmatched}. The benchmark ARMPC was infeasible from this initial condition, showing that our method still outperforms the benchmark. Next, we fix $w_2=0.5$ and vary $w_1$ to understand the impact of an estimated, unmatched dynamics component. \cref{fig:cost-envelopes} (dashed, right) shows that in our experiment, matching as much of the nonlinear dynamics as possible allows us to handle unmatched dynamics of about twice the magnitude as the benchmark. We conclude that our method is a more effective strategy even if the uncertainty is unmatched. Naturally, \cref{fig:cost-envelopes} (right) also shows that the benefit of our method diminishes as the proportion of the nonlinear dynamics~$f(x(t))$ in $\mathrm{Range}(B)$ becomes smaller.

\textbf{Planar Quadrotor.}\label{sec:exp-quadrotor}
Finally, we simulate a simplified example of a quadrotor in a windy environment. The force field induced by the wind varies spatially, modeling real-world scenarios such as down-wash from another quadrotor. We consider a planar version of the quadrotor for simplicity \cite{tedrake_underact}. We model the wind disturbance as incident at a fixed angle with a velocity that drops off according to an inverse square exponential normal to the direction of incidence. We linearize the dynamics around~$x=0$,~$u=\frac{mg}{2}[1,1]^\top$ and discretize the simulation using Euler's method. We only set constraints on the pose of the drone~$(x,y,\theta)$. Its linear and angular velocities are unconstrained. The quadrotor is an underactuated system, and therefore the discretized simulation has unmatched dynamics terms. Still, a drone controller can always match disturbance forces along the~$y$ axis. We take a Bayesian approach and model the unknown wind disturbance as a function of~$d=20$ normalized random Fourier features~$\phi_i(x) = \frac{1}{\sqrt{d}}\cos(\alpha_i^\top x + \beta_i)$ where~$\alpha\sim\mathcal{N}(0,I_n)$ and~$\beta_i\sim \mathrm{U}[0,2\pi]$. To calibrate the Bayesian prior, we first select a zero-mean prior with a variance chosen such that the confidence interval reflects a conservative bound on the wind disturbance. We then calibrate the Bayesian prior using~$k=100$ historical wind samples. 

The support of the estimated wind disturbance was too large for our benchmark to be feasible in our experiments, so we compare our approach with a naive tube MPC that does not account for the wind disturbance at all. As shown in \cref{fig:drone} (left), if the wind disturbance is axis aligned, our ARMPC learns to match the wind forces and reaches the origin quickly. In contrast, the naive tube MPC misses the origin and drifts significantly. In addition, if we set the angle of incidence of the wind as~$\theta_w = 22.5^\circ$, \cref{fig:drone} (right) shows that our approach still achieves decent control performance. This is because our certainty equivalent controller \cref{eq:unmatched-ce-law} cancels the~$y$-component of the disturbance and converges to a small steady-state offset in the~$x$-direction. In contrast, the benchmark ARMPC could not guarantee safety for any of the drone tasks, and a naive unsafe tube MPC that does not consider the wind disturbance at all performs poorly. Moreover, model adaptation reduced the realized closed-loop cost in the simulation \cref{fig:drone} (left) by about $12\%$ over 10 episodes from $J^1 = 2570$ to $J^{10} = 2259$. 

\begin{figure}[t]
    \centering
        \includegraphics[width=\linewidth]{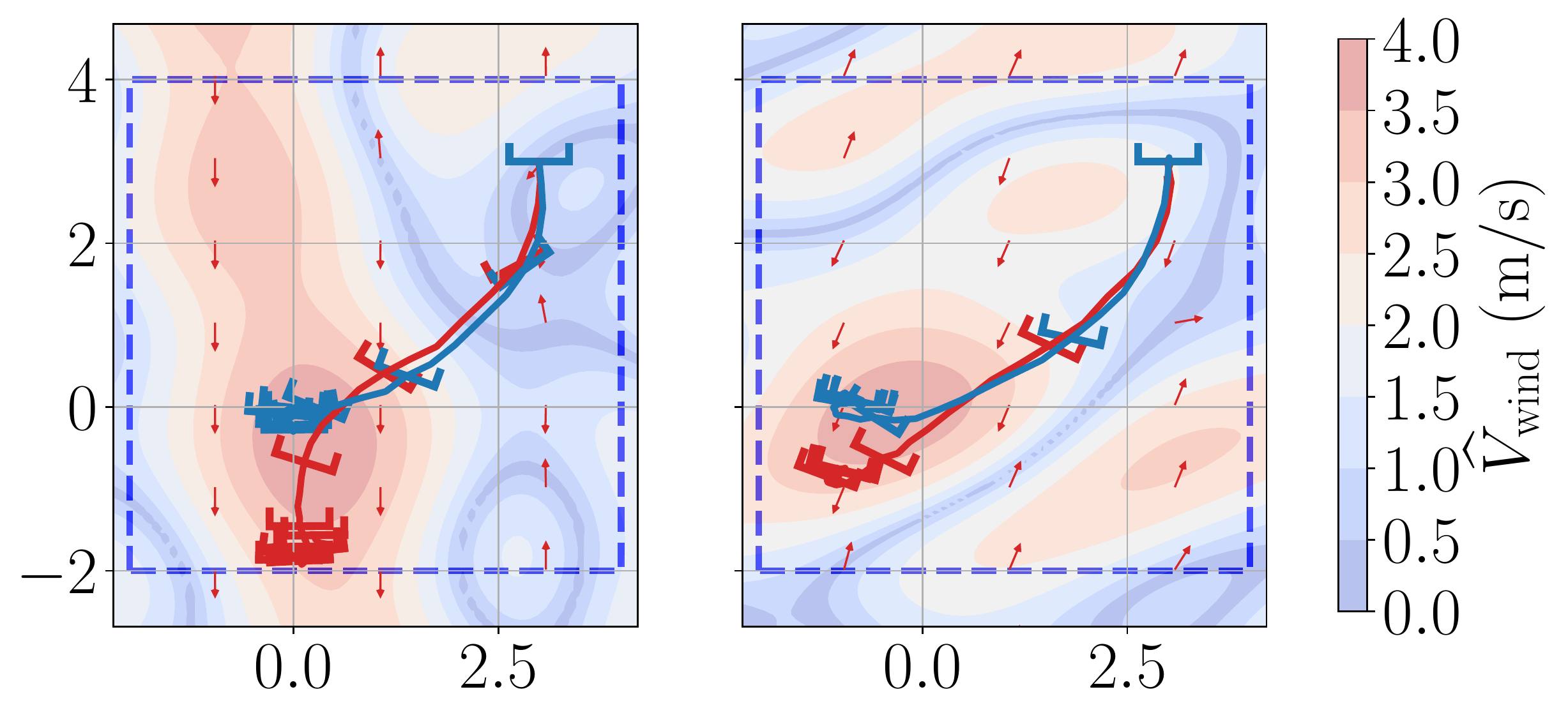}
    \caption{Learned trajectories of our adaptive MPC (blue) compared to a naive tube MPC (red) in the $xy$-plane for a simulated planar quadrotor. The icons show the orientation of the quadrotor over time. The contours and colorbar indicate the \emph{learned} wind speed in~$\mathrm{m/s}$. Left: Wind comes straight from above. Right: Wind comes at $\theta_w = 22.5^\circ$.}
    \label{fig:drone}
\end{figure}

\section{Conclusions and Future Work}\label{sec:conc}
The simulations in \cref{sec:exp} show that our method can achieve substantial performance improvements compared to existing approaches, even when a certainty equivalent controller cannot cancel significant, unmatched components of the nonlinear dynamics. We conclude that we can reduce the conservatism of robust MPC approaches by extending certainty equivalent control laws from classical adaptive control. As a result, our method tolerates nonlinear terms of larger magnitude in the dynamics.

Since our control algorithm allows for adaptation laws based on statistical techniques that are more robust to misspecification or outliers than set membership estimation, future work should extend our simulations to hardware experiments. In addition, our method requires that the features are known a priori, an assumption that future work could relax by, for example, applying meta-learning algorithms \cite{harrison_meta-learning_2020}. 
\section*{Acknowledgements}
We thank Monimoy Bujarbaruah for his thoughtful comments on an early manuscript. This research was supported in part by the NSF via CPS award \#1931815, NASA ULI grant \#80NSSC20M0163, and an Early Stage Innovations grant. S.M.R. and J.H. were also supported in part by NSERC. This article solely reflects our own opinions and conclusions, and not those of any NSF, NASA, or NSERC entity.




\bibliographystyle{unsrt} 
\bibliography{strings,references}  
\end{document}